\def\squishlist{\setlength{\itemsep}{0pt}\setlength{\parsep}{0pt}%
  \setlength{\topsep}{0pt}\setlength{\partopsep}{0pt}\setlength{\parskip}{0pt}}
\let\realbfseries=\bfseries
\def\bfseries{\realbfseries\boldmath}
\newtheorem{theorem}{Theorem}
\newtheorem{definition}{Definition}
\newtheorem{lemma}{Lemma}
\newtheorem{corollary}{Corollary}
\newcommand{\usec}[1]{\smallskip\par\noindent\textbf{#1}.\ }
\newenvironment{keywords}{
       \list{}{\advance\topsep by0.35cm\relax\small
       \leftmargin=1cm
       \labelwidth=0.35cm
       \listparindent=0.35cm
       \itemindent\listparindent
       \rightmargin\leftmargin}\item[\hskip\labelsep
                                     \bfseries Keywords:]}
     {\endlist}
\newcommand{\Oof}{\mathcal{O}}
\newcommand{\NP}{\ensuremath{\mathsf{NP}}}
\newcommand{\N}{\mathbb{N}}
\newcommand{\R}{\mathbb{R}}
\newcommand{\set}[1]{\{#1\}}
\newcommand{\setext}[2] { 
	\left \{ #1 \, \middle | \, #2 \right \}
}
\newcommand{\tw}{\operatorname{tw}}
\newcommand{\aftw}[1]{{#1}\textnormal{-}\operatorname{tw}}
\newcommand{\afhtw}[1]{{#1}\textnormal{-}\operatorname{htw}}
\newcommand{\ftw}{\aftw{f}}
\newcommand{\fhtw}{\afhtw{f}}
\newcommand{\fwidth}{f\textnormal{-}\operatorname{width}}
\newcommand{\fw}{f\textnormal{-}\omega}
\newcommand{\assoc}{associated to }
\newcommand{\pmc}{potential maximal clique }
\newcommand{\mintri}{minimal triangulation }
\newcommand{\wfunc}{width function }
\newcommand{\ftwidth}{$f$-width }
\newcommand{\fhw}{\text{fhw}}
\newcommand{\ghw}{\text{ghw}}
\newcommand{\vrt}[1]{V(#1)}
\newcommand{\vrtH}{\vrt{H}}
\newcommand{\vrtG}{\vrt{G}}
\newcommand{\vrtT}{\vrt{T}}
\newcommand{\edge}[1]{E(#1)}
\newcommand{\edgeH}{\edge{H}}
 \newcommand{\BBB}{\mathcal{B}}
\newcommand{\CCC}{\mathcal{C}} 
 \newcommand{\FFF}{\mathcal{F}}
\newcommand{\KKK}{\mathcal{K}}
 \newcommand{\TTT}{\mathcal{T}}
\newcommand{\comment}[1]{}
\newcommand{\lukas}[1]{}
\newcommand{\gaif}[1]{\underline{#1}}
\begin{document}


\title{Computing hypergraph width measures exactly}

\author{Lukas Moll \\
Humboldt Universit\"at zu Berlin \\
Berlin, Germany
\and 
Siamak Tazari\footnote{supported by a fellowship within the Postdoc-Programme of the German Academic Exchange Service (DAAD).} \\
Massachusetts Institute of Technology\\
Cambridge, USA
\and 
Marc Thurley\footnote{supported in part by Marie Curie Intra-European Fellowship 271959 at the Centre de Recerca Matem\`{a}tica, Bellaterra, Spain}\\
Centre de Recerca Matem\`{a}tica\\
Bellaterra, Spain
}

\maketitle 



\maketitle

\begin{abstract}
Hypergraph width measures are a class
of hypergraph invariants important in studying the
complexity of constraint satisfaction problems (CSPs).
We present a general exact exponential algorithm for a large variety of
these measures. 
A connection between these and
tree decompositions is established. This enables us to almost
seamlessly adapt the combinatorial and algorithmic results known for
tree decompositions of graphs to the case of hypergraphs and obtain
fast exact algorithms.

As a consequence, we provide algorithms which, given a hypergraph $H$
on $n$ vertices and $m$ hyperedges, compute the generalized hypertree-width of $H$ in time  $O^*(2^n)$ and
compute the fractional hypertree-width of $H$ in time  $O(1.734601^n \cdot m)$.
\footnote{We follow usual practice in omitting factors polynomial in $n$ 
each time the base of the exponent is rounded. Recall that this is justified as $c^n\cdot n^{O(1)} = O((c+\epsilon)^n)$
for every $\epsilon > 0$. In all other situations we use the notation $O^*$ which suppresses polynomial factors}
\end{abstract}

\begin{keywords}
generalized hypertree-width, fractional hypertree-width, exact exponential algorithms,
monotone $f$-width
\end{keywords}

\section{Introduction}
\label{sec:introduction}
\newcommand{\csp}{\text{CSP}} Hypergraph width measures form a class
of hypergraph invariants which play an important role in studying the
complexity of constraint satisfaction problems (CSPs).  For a set of
variables $V$, a domain $D$ and a set $C$ of constraints these
problems ask for an assignment of values in $D$ to the variables such
that each constraint is satisfied. This forms a generic framework for
many import combinatorial problems. Therefore, quite unsurprisingly,
constraint satisfaction problems are generally $\NP$-hard.  In order
to obtain a more detailed picture of the complexity of these problems,
there are at least two common directions to follow.  One of these is
the restriction of the type of constraints allowed 
(see, for example \cite{Sch78,FedVar98,Bul06,BarKoz09}).

The second direction is the restriction of the structure which
constraints impose on the variables. With a strong motivational
background in database theory, this kind of restrictions forms the
origin of hypergraph width measures
\cite{GotLeoSca02,GroMar06,Marx09a,Marx10}.  The \emph{hypergraph} of
an instance $(V,D,C)$ of a constraint satisfaction problem has
vertex set $V$ and contains for each constraint a hyperedge with the
variables occurring in this constraint. In this way we can give a
precise meaning to the restriction of the structure. For some class
$\mathcal H$ of hypergraphs, the input is restricted to instances
whose hypergraphs are contained in $\mathcal H$.

Let $\csp(\mathcal H)$ denote the constraint satisfaction problem
restricted as described above. Hypergraph width measures allow for
identification of tractable variants of this problem. In the case of
\emph{bounded arities} -- that is, when the cardinality of the hyperedges is
bounded by a constant -- it turns out that \emph{bounded tree-width}
completely describes this setting, as then $\csp(\mathcal{H})$ is
polynomial time computable if and only if $\mathcal H$ has bounded
tree-width \cite{GroSchSeg01,Grohe07b}.\footnote{Note that this
holds under the Parameterized Complexity assumption $FPT \neq W[1]$.}

In the unbounded arity case the situation is different. Several
hypergraph width measures have been identified which lead to larger
classes of tractable $\csp(\mathcal{H})$.  We have here the notion of
bounded \emph{(generalized)} \emph{hypertree-width} \cite{GotLeoSca02}
which extends bounded tree-width.  Even more general are classes
$\mathcal{H}$ of bounded \emph{fractional hypertree-width}
\cite{GroMar06} which still give rise to polynomial-time computable
constraint satisfaction problems.

\usec{Our Work} The central aim of the present work is an exact
algorithm for fractional hypertree-width. Note that there is a recent
algorithm which approximates fractional hypertree-width \cite{Marx09}
in polynomial time provided that it is constant.  But not only is this
algorithm unsuitable for large fractional hypertree-width.  There is
also no known non-trivial exact algorithm for this problem.

We remedy this situation by presenting an algorithm that more
generally computes any hypertree-width measure defined by some
\emph{monotone width function} $f$.  This implies an algorithm
for both fractional and generalized hypertree-width by essentially the
same means. We achieve this by reducing the problem to
computing a minimal triangulation of the underlying Gaifman graph of
the given hypergraph. Indeed, we show that it is sufficient to compute
a tree decomposition of the Gaifman graph while measuring the width of
sets of vertices in the given hypergraph. This enables us to almost
seamlessly adapt the combinatorial and algorithmic results known for
tree decompositions of graphs to the case of hypergraphs and obtain
fast exact algorithms. 

\begin{theorem}
  Let  $H$ be a hypergraph on $n$ vertices and $m$ hyperedges.
  \begin{enumerate}[(i)]
   \item The generalized hypertree-width of $H$ can be computed in time $O^*(2^n)$.
   \item The fractional hypertree-width of $H$ can be computed in time $O(1.734601^n \cdot m)$.
\end{enumerate}
\end{theorem}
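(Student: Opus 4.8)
\quad The plan is to prove both parts at once from a general principle: for a \emph{monotone} width function $f$ (one with $f(X)\le f(Y)$ whenever $X\subseteq Y$), the hypertree-width of $H$ defined by $f$ equals the least $f$-width of a tree decomposition of the Gaifman graph $G(H)$, where the $f$-width of a tree decomposition $(T,\beta)$ is $\max_{t}f(\beta(t))$ and $G(H)$ has vertex set $V(H)$ with two vertices adjacent iff they share a hyperedge. Concretely, the first step is to establish
\[
  \fhtw(H)=\min\bigl\{\,\max_{t}f(\beta(t))\ :\ (T,\beta)\text{ a tree decomposition of }G(H)\,\bigr\},
\]
and to note that $\ghw(H)$ is the instance $f=\rho_H$ (minimum number of hyperedges covering a vertex set) and $\fhw(H)$ the instance $f=\rho^{*}_{H}$ (minimum fractional edge cover number), both of which are monotone. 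One inclusion is immediate, since a generalized $f$-hypertree decomposition $(T,\chi,\lambda)$ of width $k$ is already a tree decomposition of $G(H)$ with bags $\chi(t)$ — the connectivity and edge conditions are inherited because each hyperedge lies inside some $\chi(t)$ — and $f(\chi(t))\le|\lambda(t)|\le k$ because $\lambda(t)$ covers $\chi(t)$. For the converse one uses that in any tree decomposition of $G(H)$ every clique, hence every hyperedge of $H$, is contained in some bag; assigning to each node an optimal (fractional) hyperedge cover of its bag produces a generalized hypertree decomposition of the same $f$-width.

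The second step is the reduction to minimal triangulations, and this is where monotonicity is used. The maximal cliques of a minimal triangulation $G'$ of $G(H)$ are the bags of a tree decomposition, so they realise a candidate value; conversely, given any tree decomposition, completing each bag into a clique produces a triangulation of $G(H)$ whose maximal cliques are exactly its bags, and passing to a minimal triangulation contained in it only shrinks the cliques, hence by monotonicity of $f$ does not increase the $f$-width. Therefore $\fhtw(H)$ equals the minimum over minimal triangulations $G'$ of $G(H)$ of $\max\{f(C):C\text{ a maximal clique of }G'\}$, which is a weighted-treewidth-type objective on $G(H)$ with bag weight $f$. Any algorithm that computes treewidth by enumerating minimal separators and potential maximal cliques (the Bouchitt\'e--Todinca dynamic program) or by an elimination-ordering dynamic program over vertex subsets then applies verbatim, provided one can evaluate $f$ on the bags it inspects.

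The third step is to plug in the two exact algorithms. For (ii), $f=\rho^{*}_{H}$ is polynomial-time computable by linear programming: $\rho^{*}_{H}(\Omega)$ is the value of $\min\sum_{e}x_e$ subject to $\sum_{e\ni v}x_e\ge1$ for $v\in\Omega$ and $x\ge0$, an LP with $m$ variables and at most $n$ constraints. I would enumerate all potential maximal cliques of $G(H)$ (and all minimal separators) in time $O^{*}(1.7347^{n})$ by the Fomin--Villanger bound, evaluate $\rho^{*}_{H}$ on each potential maximal clique, and run the Bouchitt\'e--Todinca recursion with bag weight $\rho^{*}_{H}$; by the first two steps this returns $\fhw(H)$, and the running time is $O(1.734601^{n}\cdot m)$, the factor $m$ persisting because a hypergraph on $n$ vertices may have more than polynomially many hyperedges. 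For (i), $f=\rho_H$ is $\NP$-hard to evaluate (it is Set Cover), so the per-potential-maximal-clique cost above is prohibitive; instead I would first tabulate $\rho_H(S)$ for all $S\subseteq V(H)$ in time $O^{*}(2^{n})$ via standard subset-convolution (for each $j$, whether $\rho_H(S)\le j$ is decided for all $S$ from the analogous table for $j-1$ by one OR-type subset convolution), and then compute the minimum over all triangulations of $\max_{C}\rho_H(C)$ in time $O^{*}(2^{n})$ by the Bodlaender--Fomin--Koster--Kratsch--Thilikos elimination-ordering dynamic program over vertex subsets, reading $\rho_H$ off the table. Again the first two steps identify the result with $\ghw(H)$, and the total time is $O^{*}(2^{n})$.

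The main obstacle is the first step together with the monotonicity argument in the second: one must see that measuring the bags of a tree decomposition of the \emph{graph} $G(H)$ by a hypergraph quantity $f$ recovers the hypertree-width exactly — here the Helly/clique-in-a-bag property of tree decompositions does the work in both directions — and that monotonicity of $f$ is precisely the hypothesis that licenses restricting attention to minimal triangulations, which is what connects the problem to the potential-maximal-clique machinery. Once this bridge is in place, (i) and (ii) differ only in which exact treewidth algorithm can be used: the $O^{*}(2^{n})$ subset dynamic program with a precomputed cover table (because $\rho_H$ is hard to evaluate) versus the $O^{*}(1.7347^{n})$ Fomin--Villanger enumeration (because $\rho^{*}_{H}$ is LP-computable), which is the source of the asymmetry between the two bounds.
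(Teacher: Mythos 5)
Your proposal is correct and follows the paper's framework almost exactly: reduce $f$-hypertree-width of $H$ to the $f$-tree-width of the Gaifman graph (the paper's Lemma 2), use monotonicity of $f$ to restrict to minimal triangulations and their maximal cliques (Lemma 4), and then run the Bouchitt\'e--Todinca potential-maximal-clique recursion with the Fomin--Villanger $\Oof(1.734601^n)$ enumeration, evaluating $\rho^*_H$ on each bag by linear programming for part (ii). The only genuine divergence is in part (i). The paper keeps the \emph{same} PMC-based algorithm and merely precomputes a table of $\rho_H$ on all $2^n$ subsets, doing so via inclusion--exclusion (counting $k$-covers as $\sum_{X\subseteq U}(-1)^{|X|}a(X)^k$) combined with the fast zeta transform, followed by binary search on $k$; the $\Oof^*(2^n)$ cost of the table is then the bottleneck, not the algorithm. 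You instead propose iterated OR-type cover products to build the table and the Bodlaender et al.\ elimination-ordering dynamic program over vertex subsets to optimize. Both routes give $\Oof^*(2^n)$; the paper's choice is slightly cleaner in that a single algorithm (Theorem 8 with the ``table of $f$'' abstraction) covers both width measures and the proof of correctness is done once, whereas your elimination-ordering DP requires a separate (though true, given monotonicity) correctness argument that the recurrence over subsets computes $\min_I\fw(I)$. One small point worth noting: the table must also contain $f$ evaluated on the inclusion-minimal full blocks $S\cup C$, not only on the potential maximal cliques, which your sketch omits but which changes nothing asymptotically.
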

The central idea of this algorithm is the adaptation of the algorithm in 
\cite{FomKraTodVil08} and the results of \cite{FomVil08} to the situation of 
hypergraphs. All of these algorithms require exponential space in the worst case. 
The proof of this result is presented in Section~\ref{sect:fwidth}.

\section{Preliminaries}
\label{sect:prelim}

\usec{Graphs and Hypergraphs} A \emph{hypergraph} is a pair $H =
(V(H),E(H))$ consisting of a set of \emph{vertices} $V(H)$ and a set
$E(H)$ of subsets of $V(H)$, the \emph{hyperedges} of $H$. Two
vertices are \emph{adjacent} if there exists an edge that contains
both of them. Unless otherwise mentioned, our hypergraphs have $n$
vertices and $m$ edges and do not contain isolated vertices (i.e.
vertices which do not occur in an edge of $H$). 

A \emph{graph} is a hypergraph in which every hyperedge has
cardinality $2$. Thus every concept defined for hypergraphs is also
given for graphs; however, there will be some notions we will use for
graphs exclusively.  For a subset $U \subseteq \vrtG$, we write $G[U]$
to denote the \emph{subgraph} of $G$ induced by $U$. Furthermore, $G -
U$ denotes the graph $G[\vrtG \setminus U]$. The \emph{neighborhood}
of a vertex $v \in \vrtG$ is $N(v) = \{ u \mid \{u,v\} \in E\}$; this
extends to sets of vertices by defining $N(S) = \bigcup_{v\in S} N(v)
\setminus S$.  A \emph{clique} of $G$ is a set $C \subseteq \vrtG$ such
that all vertices in $C$ are pairwise adjacent in $G$. A clique is
\emph{maximal} if it is not properly contained in another clique.  For
a set $S \subseteq V$ we define $S^2 = \{ \{u,v\} \mid u,v \in S,\,
u\neq v\}$.
The \emph{Gaifman graph} or \emph{primal graph} of a hypergraph $H$ is
the graph $\gaif H$ on $\vrtH$ with $E(\gaif H) := \{\{u, v\} \mid u,v
\in e,\, \text{ for some } e \in E(H)\}$.

\usec{Tree Decompositions and Width Functions} 
A \emph{tree decomposition} of a hypergraph $H$ is a pair $(T,\BBB)$,
where $T$ is a tree and $\BBB=\set{B_t \mid t \in \vrtT}$ is a family of
subsets of $\vrtH$, called \emph{bags}, such that
\begin{itemize} \squishlist
 \item[(i)] every vertex of $H$ appears in some bag of $\BBB$;
 \item[(ii)] for every hyperedge $e \in E(H)$ there is a $t \in \vrtT$
   such that $e \subseteq B_t$; and
 \item[(iii)] for every vertex $v \in \vrtH$ the set of bags
   containing $v$ forms a subtree $T_v$ of $T$.
\end{itemize}
A \emph{\wfunc}on the vertex set $V$ is a monotone function $f : 2^{V}
\rightarrow \R_0^+$, i.e.\ with $f(X) \leq f(Y)$ for $X \subseteq Y$.
We define $\FFF(V)$ to be the set of all width functions on $V$.  The
\emph{$\fwidth$ of a tree decomposition $\TTT$} is $\max\{f(B_t) \mid
t \in \vrtT\}$.  The \emph{$f$-hypertree-width of a hypergraph $H$},
denoted by $\fhtw(H)$, is the minimum $\fwidth$ of all tree
decompositions of $H$. We call such a tree decomposition an
\emph{$f$-optimal tree decomposition}. When considering graphs, we use
the analogous notion of \emph{$f$-tree-width} and denote it by
$\ftw(G)$.  In this setting we obtain the \emph{tree-width} of a
hypergraph $H$ as follows.

\begin{definition}
Let $s(X) = \vert X \vert - 1$; then the tree-width of $H$ is 
$\tw(H):= \afhtw{s}(H)$. 
\end{definition}
Similarly, we can define other well-known width measures. Let $H$ be a
hypergraph and $X \subseteq V(H)$. An \emph{edge cover} (w.r.t.\ $H$)
of $X$ is a subset $E' \subseteq E(H)$ such that $X \subseteq
\bigcup_{e \in E'} e$. Define $\rho_H(X)$ as the size of the smallest
edge cover of $X$ w.r.t.\ $H$. Note that this number is well-defined,
as $H$ does not contain isolated vertices.

Relaxing this, we arrive at \emph{fractional edge covers}. For a set
$X \subseteq V(H)$ a mapping $\gamma: E(H) \to [0,1]$ is a fractional
edge cover of $X$ (w.r.t.\ $H$), if $ \sum_{v \in e} \gamma(e) \ge 1$
for all $v \in X$.  Then $\rho^*_H(X)$ is the minimum of $\sum_{e \in
  E(H)} \gamma(e)$ taken over all fractional edge covers of $X$
w.r.t.\ $H$.
\begin{definition}
Let $H$ be a hypergraph.
\begin{itemize}
\item[$\bullet$] The \emph{generalized hypertree-width} of $H$ is
  $\ghw(H):= \afhtw{\rho_H}(H)$.
\item[$\bullet$] The \emph{fractional hypertree-width} of $H$ is
  $\fhw(H):= \afhtw{\rho^*_H}(H)$.
\end{itemize}
\end{definition}

\usec{Separators} For two non-adjacent vertices $u,v$ of a graph $G$,
a set $S \subseteq \vrtG$ is a \emph{$u,v$-separator} if $u$ and $v$
are in different components of $G - S$. Further, $S$ is a
\emph{minimal $u,v$-separator} if no proper subset of $S$ is a
$u,v$-separator. Generally, $S$ is a \emph{minimal separator} if it is
a minimal $u,v$-separator for some $u,v$. By $\Delta_G$ we denote the
set of all minimal separators of $G$.
Observe that a minimal separator of $G$ can be contained in another one. 
We call minimal separators not containing another one
\emph{inclusion-minimal} separators and denote the set of these 
by $\Delta_G^*$.

Let $\CCC_G(S)$ denote the set of connected components of $G - S$~(see
Fig.~\ref{fig:omegaComponents}).  A component $C \in \CCC_G(S)$ is
\emph{full} w.r.t.\ $S$, if $N(C) = S$.  By $\CCC_G^*(S)$ we denote
the set of all full connected components of $G - S$.  A \emph{block}
associated with an $S \in \Delta_G$ is a pair $(S,C)$ for some component
$C \in \CCC_G(S)$. A block is called \emph{full} if $C$ is full
w.r.t.\ $S$. Note that by definition, the set $S$ of a block $(S,C)$
is required to be a minimal separator. The \emph{realization} $R(S,C)$
of a block is the graph obtained from $G[S \cup C]$ by turning $S$
into a clique.

\usec{Triangulations, Potential Maximal Cliques}
A graph $G$ is \emph{triangulated} or \emph{chordal} if every cycle of
length at least $4$ in $G$ has a \emph{chord}, that is, an edge
between two non-consecutive vertices of the cycle. A
\emph{triangulation} of $G$ is a chordal graph $I$ on $\vrtG$ such
that $E(G) \subseteq E(I)$. Furthermore, $I$ is a \emph{minimal
  triangulation} if there is no chordal graph $I'$ on $\vrtG$ with
$E(G) \subseteq E(I') \subset E(I)$.

A set $\Omega \subseteq \vrtG$ is a \emph{\pmc}of $G$, if there is a
\mintri $I$ of $G$ such that $\Omega$ is a maximal clique in $I$.  The
set of all potential maximal cliques of $G$ is denoted by $\Pi_G$.
Let $\Omega$ be a \pmc with the components $\CCC(\Omega)$ of $G -
\Omega$ and $C \in \CCC(\Omega)$; then $(N(C), C)$ is called a
\emph{block associated with $\Omega$}~(see
Fig.~\ref{fig:omegaComponents}).

Finally, we define the \emph{$f$-clique-number of $G$} to be
	$\fw(G) := \max_{\textnormal{clique } \Omega \textnormal{ of } G} \quad f(\Omega)$.

\begin{figure}[t]
	\centering
	\begin{minipage}[t]{0.495\textwidth}
		\centering{\includegraphics[scale=0.5, trim = 10 30 10 5,clip]{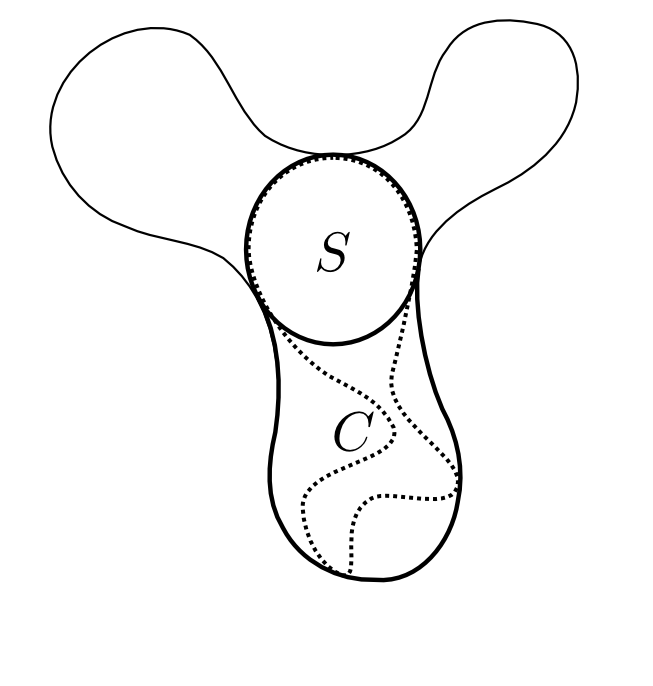}} 
	\end{minipage}
	\begin{minipage}[t]{0.495\textwidth}
		\centering{\includegraphics[scale=0.5, trim = 10 30 10 5,clip]{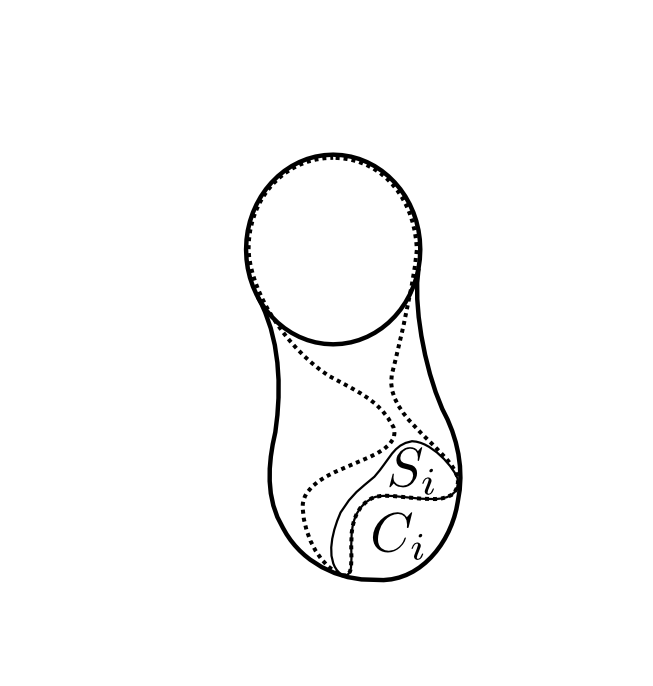}}
	\end{minipage}
	\caption{Left: The graph $G$ with separator $S$, component $C
          \in \CCC_G(S)$ of $G - S$ and $\Omega$ (dotted line).
	  Right: The block (S,C) with $\Omega$ (dotted line) and the full
          block $(S_i,C_i)$ associated with $\Omega$ with
          $C_i \in \CCC_G(S_i) \cap \CCC_G(\Omega)$ and $S_i = N(C_i)$.}
	\label{fig:omegaComponents}
\end{figure}

\section{Computing $f$-Optimal Tree Decompositions of Graphs and
  Hypergraphs}
\label{sect:fwidth}
A tree decomposition $(T,\BBB)$ with $\BBB=\setext{B_t}{t \in \vrtT}$
is \emph{small}, if for all $t,t' \in \vrtT$ with $t \neq t'$ we have
$B_t \nsubseteq B_{t'}$. We need some well-known facts about tree
decompositions:

\begin{lemma}\label{lem:well-known}
  Let $G$ be a graph, $\TTT = (T, (B_t)_{t \in V(T)})$ a tree
  decomposition of $G$, and $f \in \FFF(\vrtG)$ a width function. Then
  the following holds.
\begin{itemize} \squishlist
\item[(i)] For every clique $\Omega \subseteq V$ in $G$ there is a $t
  \in V(T)$ such that $\Omega \subseteq B_t$.
\item[(ii)] There is a small tree decomposition $\TTT'$ such that
  $\fwidth(\TTT') = \fwidth(\TTT)$.
\item[(iii)] For all $s,t,t' \in \vrtT$ such that $t'$ lies on the path
  from $s$ to $t$ in $T$, we have $B_{s} \cap B_{t} \subseteq B_{t'}$.
\end{itemize}
\end{lemma}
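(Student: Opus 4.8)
The plan is to prove the three items in the order (iii), (i), (ii) --- item (ii) will invoke item (iii), and all of them rest on the subtree condition of the definition of a tree decomposition (supplemented, for (i), by the Helly property of subtrees of a tree). Item (iii) I would dispatch first and directly: if $v \in B_s \cap B_t$, then $s$ and $t$ both lie in the subtree $T_v$ of $T$ formed by the nodes whose bag contains $v$; since $T_v$ is connected, the unique $s$--$t$ path of $T$ lies inside $T_v$, so $t' \in T_v$, i.e.\ $v \in B_{t'}$. As $v$ was arbitrary, $B_s \cap B_t \subseteq B_{t'}$.

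For item (i) I would use the Helly property: a family of pairwise intersecting subtrees of a tree has a common node. Given a clique $\Omega$ in $G$ and any $u,v \in \Omega$, the pair $\{u,v\}$ is an edge of $G$, so the definition of a tree decomposition supplies a bag containing both, whence $T_u \cap T_v \neq \emptyset$. Thus $\{T_v : v \in \Omega\}$ is pairwise intersecting, hence has a common node $t \in \vrtT$, and by construction $\Omega \subseteq B_t$. (If one prefers to avoid citing Helly, I would instead root $T$, let $r_v$ be the node of $T_v$ nearest the root, pick $r_u$ deepest among these, and check via connectedness of each $T_w$ that every $w \in \Omega$ satisfies $w \in B_{r_u}$; I would include whichever write-up is shorter.)

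For item (ii) I would transform $\TTT$ by repeatedly choosing an edge $\{t,t'\}$ of $T$ with $B_t \subseteq B_{t'}$, contracting it, assigning the merged node the bag $B_{t'}$, and discarding $B_t$. I would then verify that this keeps a tree decomposition: the two covering axioms survive since every vertex (and edge-subset) of $B_t$ already sits in $B_{t'}$, and the subtree condition survives since contracting an edge inside a connected $T_v$ keeps it connected. The operation does not change the $\fwidth$: deleting $B_t$ cannot lower the maximum because monotonicity of $f$ gives $f(B_t) \le f(B_{t'})$. Each step strictly decreases $|\vrtT|$, so the process halts at a tree decomposition $\TTT'$ in which no two \emph{adjacent} bags are comparable by inclusion. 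To promote this to genuine smallness, suppose $B_t \subseteq B_{t'}$ for some non-adjacent $t \neq t'$, let $t''$ be the neighbour of $t$ on the $t$--$t'$ path, and apply item (iii): $B_t = B_t \cap B_{t'} \subseteq B_{t''}$, contradicting the stopping condition. Hence $\TTT'$ is small and $\fwidth(\TTT') = \fwidth(\TTT)$.

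The argument is routine throughout; the step demanding the most care is item (ii) --- confirming that edge contraction preserves the subtree condition, and then bootstrapping ``no comparable neighbours'' up to full smallness via item (iii). I do not expect a genuine obstacle anywhere.
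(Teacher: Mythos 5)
Your proof is correct. The paper itself offers no proof of this lemma---it is stated as a collection of well-known facts about tree decompositions---so there is nothing to compare against; your arguments are the standard ones (connectivity of the subtrees $T_v$ for (iii), the Helly property of subtrees of a tree for (i), and repeated contraction of an edge whose endpoints carry comparable bags, with monotonicity of $f$ guaranteeing the width is preserved and with (iii) upgrading ``no comparable adjacent bags'' to smallness, for (ii)). All steps check out, including the slightly delicate verification that contracting such an edge preserves the subtree condition.
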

It is important to note here, that we will use the notion of
$\afhtw{f}$ in a slightly unusual way. Similarly to the functions
$\rho_H$ and $\rho^*_H$, we will be interested in some width function
$f_H$ which is defined on a hypergraph $H = (V(H),E(H))$, but then
apply it to tree-decompositions of a \emph{graph} $G$. The sole
prerequisite is here, that $V(G) = V(H)$ to ensure that
$\aftw{f_H}(G)$ is still well-defined.  It turns out that this very
concept of measuring the width of a tree decomposition of a graph
using the width function defined on a given hypergraph is the crucial
idea that makes our algorithm work in such a general form. We will
make the dependence of $f$ on $H$ explicit by the subscript $f_H$,
whenever this is important.

\begin{lemma} \label{MyLem01}
Let $H$ be a hypergraph, $\gaif H$ its Gaifman graph and $f_H$ a
width function on $V(H)$. Then
$$
\afhtw{f_H}(H) = \aftw{f_H}(\gaif{H}) \, .
$$
In particular, $\TTT$ is a tree decomposition of $H$ 
if, and only if, it is a tree decomposition of $\gaif{H}$.
\end{lemma}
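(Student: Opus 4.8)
The plan is to prove the ``in particular'' claim first, since the width equality follows from it almost immediately: once we know the two hypergraphs have exactly the same tree decompositions, and since $V(H) = V(\gaif H)$ guarantees that $f_H$ evaluates on the same bags in both cases, the two minimization problems are literally identical, so $\afhtw{f_H}(H) = \aftw{f_H}(\gaif H)$.

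\medskip\noindent\textbf{Equality of the sets of tree decompositions.}
Let $\TTT = (T, (B_t)_{t\in V(T)})$ be an arbitrary pair with $T$ a tree and $(B_t)$ a family of subsets of $V(H) = V(\gaif H)$. I would check the three defining conditions of a tree decomposition for $H$ against those for $\gaif H$. Condition (i) (every vertex appears in some bag) refers only to the common vertex set, so it is the same statement for both. Condition (iii) (the bags containing a fixed vertex induce a subtree) likewise only mentions vertices and bags, not edges, so it is identical for $H$ and $\gaif H$. Hence the only thing to verify is that condition (ii) -- the ``edge/hyperedge covering'' condition -- is equivalent for the two hypergraphs. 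Concretely I must show: for every $e \in E(H)$ there is a $t$ with $e \subseteq B_t$, if and only if, for every $\{u,v\} \in E(\gaif H)$ there is a $t$ with $\{u,v\}\subseteq B_t$.

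\medskip\noindent\textbf{The forward direction} is the one that needs an argument, and it is the main (mild) obstacle: given an edge $\{u,v\}$ of $\gaif H$, by definition of the Gaifman graph there is some hyperedge $e \in E(H)$ with $u,v \in e$; by condition (ii) for $H$ there is a bag $B_t \supseteq e \supseteq \{u,v\}$, which is exactly what condition (ii) for $\gaif H$ demands. \textbf{The reverse direction} uses the standard ``clique in a tree decomposition lies in a bag'' fact, i.e.\ Lemma~\ref{lem:well-known}(i) (or, more self-containedly, the Helly property of subtrees of a tree applied via condition (iii)): fix a hyperedge $e \in E(H)$; every pair of distinct vertices of $e$ is adjacent in $\gaif H$ by construction, so $e$ is a clique of $\gaif H$, and since $\TTT$ is a tree decomposition of $\gaif H$ satisfying conditions (i) and (iii), $e$ is contained in a single bag $B_t$. (Degenerate cases where $|e| \le 1$ are immediate from condition (i), but since our hypergraphs have no isolated vertices and we may harmlessly discard empty edges, this does not arise.) Therefore the two sets of conditions define exactly the same family $\TTT$, proving that $\TTT$ is a tree decomposition of $H$ iff it is one of $\gaif H$.

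\medskip\noindent\textbf{Conclusion.}
Since the collections of tree decompositions coincide and, for each such $\TTT$, the quantity $\max\{f_H(B_t) \mid t \in V(T)\}$ depends only on the bags (and $f_H$ is defined on $V(H) = V(\gaif H)$), the $f_H$-widths of $\TTT$ as a decomposition of $H$ and of $\gaif H$ are equal. Taking the minimum over this common family yields $\afhtw{f_H}(H) = \aftw{f_H}(\gaif H)$, as claimed. The only genuinely non-trivial ingredient is the ``clique lies in a bag'' step for the reverse direction of condition~(ii); everything else is unwinding definitions.
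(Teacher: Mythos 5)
Your proof is correct and follows essentially the same route as the paper's: the easy direction is unwinding the definition of the Gaifman graph, and the reverse direction invokes Lemma~\ref{lem:well-known}~(i) to place each hyperedge (a clique of $\gaif H$) inside a single bag. You simply spell out the details that the paper compresses into two sentences.
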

\begin{proof} 
  It is easy to see that any tree decomposition of $H$
  is a tree decomposition of $\gaif H$.  Conversely, the fact that any
  tree decomposition of $\gaif H$ is a tree decomposition of $H$
  follows from Lemma~\ref{lem:well-known}~(i).
\end{proof}

Let $G$ be a graph and $\KKK_G$ the set of maximal cliques of $G$.
The labeled tree ${\TTT := {(T, (\Omega_t)_{t \in V(T)})}}$ is a
\emph{tree on $\KKK_G$}, if every maximal clique of $\KKK_G$
corresponds to exactly one vertex of $T$.  $\TTT$ is a
\emph{clique-tree of $G$}, if it satisfies the
\emph{clique-intersection property}:
\begin{itemize}
 \item[\textup{\textbf{(CI)}}] For every pair $\Omega, \Omega'
   \in \KKK_G$ of distinct cliques $\Omega \cap \Omega'$ is contained in every
   clique on the unique path connecting $\Omega$ and $\Omega'$ in $\TTT$.
\end{itemize}
It is well known (see e.g. Theorem 3.1 in \cite{BlaPey93}) that a graph $G$
is chordal if and only if it has a clique tree.

\begin{lemma} \label{lemX1} 
  Let $G$ be a chordal graph and $f \in \FFF(\vrtG)$ a width function.
  Then
  \begin{displaymath}
    \ftw(G) = \fw(G)
  \end{displaymath}
\end{lemma}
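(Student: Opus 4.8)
The plan is to establish the equality $\ftw(G) = \fw(G)$ for a chordal graph $G$ by a two-sided inequality, exploiting the correspondence between chordal graphs and clique-trees recalled just before the statement. For the inequality $\ftw(G) \ge \fw(G)$: take any tree decomposition $\TTT = (T,(B_t)_t)$ of $G$. By Lemma~\ref{lem:well-known}~(i), every clique $\Omega$ of $G$ — in particular a maximum-$f$ clique — is contained in some bag $B_t$, so $f(\Omega) \le f(B_t) \le \fwidth(\TTT)$ by monotonicity of $f$. Taking the maximum over cliques gives $\fw(G) \le \fwidth(\TTT)$, and since $\TTT$ was arbitrary, $\fw(G) \le \ftw(G)$. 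Note this direction does not even use chordality.

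For the reverse inequality $\ftw(G) \le \fw(G)$ we use chordality: since $G$ is chordal, it has a clique-tree $\TTT = (T,(\Omega_t)_{t\in V(T)})$ on the set $\KKK_G$ of maximal cliques, satisfying property \textbf{(CI)}. I claim $\TTT$ is in fact a tree decomposition of $G$. Condition (i) of the definition holds because every vertex lies in some maximal clique; condition (ii) (here for a graph, every edge) holds because every edge $\{u,v\}$ lies in a maximal clique and hence in some bag; condition (iii), the connectedness of $T_v := \{t : v \in \Omega_t\}$, follows from \textbf{(CI)}: if $v \in \Omega_t \cap \Omega_{t'}$ then $v$ lies in every clique on the $t$--$t'$ path, so $T_v$ is connected. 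Thus $\TTT$ is a tree decomposition whose bags are exactly the maximal cliques of $G$, so $\fwidth(\TTT) = \max_{t} f(\Omega_t) = \max_{\Omega \in \KKK_G} f(\Omega)$. Since every clique is contained in a maximal clique, monotonicity of $f$ gives $\max_{\Omega \in \KKK_G} f(\Omega) = \fw(G)$, whence $\ftw(G) \le \fwidth(\TTT) = \fw(G)$.

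Combining the two inequalities yields $\ftw(G) = \fw(G)$. The only genuinely delicate point is verifying that a clique-tree is a tree decomposition — specifically that property \textbf{(CI)} gives condition (iii) — together with the observation that restricting to maximal cliques suffices because $f$ is monotone; the rest is bookkeeping. One should also note the edge case where $G$ has no edges (each vertex its own maximal clique), which the argument handles without modification.
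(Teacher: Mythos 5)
Your proof is correct and follows essentially the same route as the paper: the lower bound via Lemma~\ref{lem:well-known}~(i) and monotonicity of $f$, and the upper bound by observing that a clique-tree of the chordal graph $G$ is a tree decomposition whose $f$-width equals $\fw(G)$. You merely spell out the verification that a clique-tree satisfies the tree-decomposition axioms, which the paper dismisses with ``clearly.''
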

\begin{proof}
  Let $\Omega$ be a clique of $G$ that maximizes $f(\Omega)$.  By
  Lemma~\ref{lem:well-known}~(i) every tree decomposition of $G$
  contains a bag that contains $\Omega$. This proves $\ftw(G) \geq
  f(\Omega) = \fw(G)$.
  To see $\ftw(G) \le \fw(G)$, let $\TTT$ be a clique-tree of $G$.
  Clearly, $\TTT$ is a tree decomposition of $G$ with $\fwidth(\TTT) =
  \fw(G)$.
\end{proof}

\begin{lemma} \label{lemX2} 
Let $G$ be a graph and $f \in \FFF(\vrtG)$ a width function.
Then
\begin{equation}\label{eq:mintriang}
\ftw(G) = \min_{\substack{\textnormal{triangulation} \\ I \textnormal{ of } G}} \quad \fw(I) \, .
\end{equation}
Furthermore, the minimum on the right-hand side is attained by a minimal triangulation of $G$.
\end{lemma}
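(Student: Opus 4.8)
The plan is to prove the two inequalities separately, using the characterization of chordal graphs via clique trees together with Lemma~\ref{lemX1}.

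For the inequality $\ftw(G) \ge \min_I \fw(I)$, I would start with an $f$-optimal tree decomposition $\TTT = (T,(B_t)_{t\in V(T)})$ of $G$, so that $\fwidth(\TTT) = \ftw(G)$. By Lemma~\ref{lem:well-known}~(ii) we may assume $\TTT$ is small. The standard move is to form the graph $I$ on $V(G)$ obtained by turning every bag $B_t$ into a clique; equivalently, $\{u,v\}\in E(I)$ iff some bag contains both $u$ and $v$. Then $E(G)\subseteq E(I)$ because every edge of $G$ lies in a common bag (tree-decomposition property (ii) for the $2$-element hyperedges of $G$), and $\TTT$ is still a tree decomposition of $I$ — properties (i) and (iii) are unchanged, and the newly created edges of $I$ sit inside bags by construction. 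A tree decomposition of a graph whose bags are cliques of that graph certifies chordality (this is essentially the clique-tree characterization cited from \cite{BlaPey93}), so $I$ is chordal, i.e.\ a triangulation of $G$. By Lemma~\ref{lem:well-known}~(i), every clique of $I$ is contained in some bag $B_t$; by monotonicity of $f$, $\fw(I) = \max_{\text{clique }\Omega\text{ of }I} f(\Omega) \le \max_{t} f(B_t) = \fwidth(\TTT) = \ftw(G)$. Hence $\min_I \fw(I) \le \ftw(G)$.

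For the reverse inequality $\ftw(G) \le \min_I \fw(I)$, let $I$ be any triangulation of $G$. Since $I$ is chordal, Lemma~\ref{lemX1} gives $\aftw{f}(I) = \fw(I)$, so there is a tree decomposition $\TTT$ of $I$ with $\fwidth(\TTT) = \fw(I)$. Because $E(G)\subseteq E(I)$, every tree decomposition of $I$ is also a tree decomposition of $G$ (the three defining conditions only become easier when edges are removed). Thus $\ftw(G) \le \fwidth(\TTT) = \fw(I)$. Taking the minimum over all triangulations $I$ yields $\ftw(G) \le \min_I \fw(I)$, which combined with the previous paragraph proves \eqref{eq:mintriang}.

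For the final sentence, I would observe that the triangulation $I$ built in the first paragraph from an $f$-optimal tree decomposition need not be minimal, but any triangulation contains a minimal one: if $I$ is not minimal, there is a chordal $I'$ with $E(G)\subseteq E(I') \subset E(I)$, and iterating (the edge set strictly decreases, so this terminates) yields a minimal triangulation $I^*$ with $E(I^*) \subseteq E(I)$. Then every clique of $I^*$ is a clique of $I$, so by monotonicity $\fw(I^*) \le \fw(I) = \ftw(G)$; combined with the already-proven inequality $\fw(I^*) \ge \ftw(G)$, the minimum in \eqref{eq:mintriang} is attained at $I^*$. The only mildly delicate point is the claim that a tree decomposition all of whose bags are cliques witnesses chordality; I expect this to be the main thing to get right, but it is exactly the content of the clique-tree theorem of \cite{BlaPey93} already invoked in the excerpt, applied after first massaging $\TTT$ via Lemma~\ref{lem:well-known}~(ii) so that its bags are precisely the maximal cliques of $I$.
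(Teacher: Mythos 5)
Your proposal is correct and follows essentially the same route as the paper: one direction via Lemma~\ref{lemX1} applied to an arbitrary triangulation, the other by saturating the bags of a small $f$-optimal tree decomposition into cliques and certifying chordality through the clique-tree characterization, with the minimality claim handled by passing to a minimal triangulation inside $I$. The only difference is that you spell out the attainment-by-a-minimal-triangulation step more explicitly than the paper does; no gaps.
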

\begin{proof}
Let $I$ be any triangulation of $G$.  Since $E(G) \subseteq E(I)$,
every tree decomposition of $I$ is also a tree decomposition of $G$
and so, $\ftw(G) \leq \ftw(I)$. By Lemma~\ref{lemX1}, we have thus $\ftw(G)
\le \fw(I)$.

For the other direction, let $\TTT = (T, (B_t)_{t \in V(T)})$ be 
a small $f$-optimal tree decomposition of $G$, i.e.\ $\fwidth(\TTT) = \ftw(G)$. 
We construct a triangulation $I := (V(G), E(I))$ of $G$ by transforming the vertices 
of every bag of $\TTT$ into a clique in $I$. That is,
  ${E(I) := \setext{\{v,u\}}{v \neq u, \enspace \exists t \in V(T): v,u \in B_t}}$.
Obviously $\TTT$ is still a tree decomposition of 
$I$ with $\fwidth(\TTT) = \fw(I)$. We show that $I$ is chordal by 
arguing that $\TTT$ is a clique-tree of $I$.
To see this, note that Lemma~\ref{lem:well-known}~(i) and the fact 
that $\TTT$ is small imply that there
is a bijection between maximal cliques of $I$ and bags of $\TTT$.
The clique-intersection property holds by Lemma~\ref{lem:well-known}~(iii). 
The monotonicity of $f$ implies that the triangulation $I$
that minimizes the right-hand side of~(\ref{eq:mintriang}) can be
chosen to be minimal.
\end{proof}

\subsection{An Algorithm to Compute the $f$-tree-width of Graphs}

The following facts about minimal separators and potential maximal
cliques are well-known, see e.g.\ Theorem~2.10 in~\cite{KloKraSpi97}
and Lemma~3.14 in~\cite{BouTod01}:

\begin{lemma}\label{lem:well-known2}
Let $G$ be a graph, $I$ a minimal triangulation of $G$, and $\Omega$ a
\pmc of $G$.
\begin{enumerate}[(i)] \squishlist
\item\label{lem:wk2:incminsep} Every block associated with an
  inclusion-minimal separator $S$ of $G$ is a full block,
  i.e.\ $\CCC_G(S) = \CCC_G^*(S)$.
\item\label{lem:wk2:minsepi} Every minimal separator of $I$ is also a
  minimal separator of $G$, i.e.\ $\Delta_I \subseteq \Delta_G$.
\item\label{lem:wk2:omegablock} Every block $(S,C)$ \assoc $\Omega$
  is, in fact, a full block of $G$; in particular, $S \in \Delta_G$.
\end{enumerate}
\end{lemma}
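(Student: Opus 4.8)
\textbf{Proof plan for Lemma~\ref{lem:well-known2}.}
The three statements are classical, so the plan is to recall the short arguments rather than reprove the structure theory of minimal triangulations from scratch. For part~(\ref{lem:wk2:incminsep}), suppose $S$ is an inclusion-minimal separator and let $(S,C)$ be a block associated with $S$ that is not full, i.e.\ $N(C) = S' \subsetneq S$. Since $S$ is a minimal separator, it is a minimal $u,v$-separator for some pair $u,v$; the standard fact is that for a minimal $u,v$-separator there are (at least) two full components, one containing $u$ and one containing $v$, so $S$ has at least two full components $C_1, C_2$. One then checks that $N(C) = S'$ is itself a separator (it separates $C$ from $C_1 \cup C_2 \setminus S'$, say), and contained in a minimal separator $S'' \subseteq S'$; since $S' \subsetneq S$ this contradicts the inclusion-minimality of $S$. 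The key step here is the existence of two full components for any minimal separator, which I would quote as folklore.

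For part~(\ref{lem:wk2:minsepi}), the argument I would give is the one underlying Theorem~2.10 in~\cite{KloKraSpi97}: if $S \in \Delta_I$ and $S$ is a minimal $u,v$-separator in $I$, then since $E(G) \subseteq E(I)$ any path in $G$ avoiding $S$ is also a path in $I$ avoiding $S$, so $S$ is still a $u,v$-separator in $G$. It remains to see it is a \emph{minimal} such separator in $G$; here one uses that $I$ is a \emph{minimal} triangulation, so by the characterization of minimal triangulations (Parra--Scheffler) the edges added inside $S$ are exactly those forced by the two full components of $S$ in $I$, and every vertex of $S$ has a neighbour in each of the two full $I$-components $C_1, C_2$, which are also connected in $G - S$; this yields that no proper subset of $S$ separates $u$ from $v$ in $G$. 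I would cite this rather than expand the Parra--Scheffler machinery.

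For part~(\ref{lem:wk2:omegablock}), recall that $\Omega$ being a potential maximal clique means there is a minimal triangulation $I$ of $G$ in which $\Omega$ is a maximal clique. Let $C \in \CCC_G(\Omega)$ with $S = N(C)$. The plan is: first, $S$ is a clique in $I$ (standard fact about potential maximal cliques: the neighbourhood of every component of $G - \Omega$ is completed into a clique in any minimal triangulation realizing $\Omega$, see Lemma~3.14 in~\cite{BouTod01}), and $S \subseteq \Omega$; second, $C$ is also a connected component of $I - S$ and is full with respect to $S$ in $I$, which via part~(\ref{lem:wk2:minsepi}) (or directly) gives $S \in \Delta_I \subseteq \Delta_G$; third, fullness of $(S,C)$ in $G$ follows since $S = N_G(C)$ by the choice $S = N(C)$. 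So $(S,C)$ is a full block of $G$ with $S$ a minimal separator, as claimed.

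The only genuinely delicate point is the minimality (as opposed to mere separation) claims in parts~(\ref{lem:wk2:incminsep}) and~(\ref{lem:wk2:minsepi}); both reduce to the ``two full components'' property of minimal separators together with the Parra--Scheffler characterization of minimal triangulations, and since the lemma is explicitly invoked as well-known with precise references, I would present these as recalled facts and keep the proof to a few lines rather than redeveloping the theory.
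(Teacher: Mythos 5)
Your proposal is correct, and it takes the same route as the paper, which offers no proof of this lemma at all but simply cites Theorem~2.10 of~\cite{KloKraSpi97} and Lemma~3.14 of~\cite{BouTod01} as the source of these well-known facts. Your sketches (the two-full-components property for part~(i), the Parra--Scheffler/Kloks--Kratsch--Spinrad characterization for part~(ii), and the Bouchitt\'e--Todinca structure of potential maximal cliques for part~(iii)) are accurate reconstructions of exactly the arguments behind those citations, so there is nothing to reconcile.
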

We proceed with a lemma from \cite{KloKraSpi97}:
\begin{lemma}[Lemma~3.1 in \cite{KloKraSpi97}] \label{lem4.1} 
  Let $G$ be a graph, $S$ a minimal separator of $G$, and $I_C$ a
  \mintri of $R(S,C)$ for each component $C$ of $G - S$.  Then the graph
  $I$ on $\vrtG$ with $E(I) := \bigcup_{C \in \CCC_G(S)} E(I_C)$ is a
  \mintri of $G$.

  Conversely, let $I$ be a \mintri of $G$ and $S$ a minimal separator of
  $I$.  Then $I[S \cup C]$ is a \mintri of $R(S,C)$ for each component
  $C$ of $G - S$.
\end{lemma}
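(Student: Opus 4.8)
The plan is to reduce everything to a handful of classical facts about chordal graphs and minimal triangulations and then to derive the second (``restriction'') half of the statement from the first (``gluing'') half. The facts I would invoke freely are: (a) every minimal separator of a chordal graph is a clique; (b) every minimal separator $S$ of a graph has at least two connected components of its complement that are full with respect to $S$; (c) if $S$ is a clique of a graph $H$ and $D_1,\dots,D_r$ are the components of $H-S$, then $H$ is chordal if and only if every $H[S\cup D_i]$ is chordal; and (d) every triangulation of a graph contains a \mintri as a subgraph. I would also use the inclusion $\Delta_I\subseteq\Delta_G$ from Lemma~\ref{lem:well-known2}, together with the trivial restatement of minimality: $I$ is a \mintri of $G$ exactly when $I$ is chordal, $V(I)=V(G)$, $E(G)\subseteq E(I)$, and the only chordal graph $I'$ with $E(G)\subseteq E(I')\subseteq E(I)$ is $I$ itself.

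For the first half, write $\CCC_G(S)=\{C_1,\dots,C_k\}$ and $I:=(V(G),\bigcup_i E(I_{C_i}))$. I would begin with a few elementary observations: $E(G)\subseteq E(I)$ (an edge of $G$ inside $S$ lies in $S^2$, and an edge of $G$ meeting some $C_i$ lies in $E(G[S\cup C_i])\subseteq E(R(S,C_i))$); $S$ is a clique of $I$; no edge of $I$ joins two distinct $C_i$ and $C_j$, since every edge of $I$ sits inside some $S\cup C_i$; hence $I[S\cup C_i]=I_{C_i}$ for every $i$, the components of $I-S$ are exactly $C_1,\dots,C_k$, and $N_I(C_i)=S$ whenever $C_i$ is full with respect to $S$ in $G$. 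Chordality of $I$ then follows from fact~(c). For minimality I would take an arbitrary chordal $I'$ with $E(G)\subseteq E(I')\subseteq E(I)$ and first argue that $S$ is a clique of $I'$: choosing by~(b) two components $C_a,C_b$ full with respect to $S$ in $G$, the sandwich $S=N_G(C_a)\subseteq N_{I'}(C_a)\subseteq N_I(C_a)=S$ and its analogue for $C_b$ show (each is connected in $I'$ and has neighborhood $S$ there) that $C_a$ and $C_b$ are again full components of $I'-S$, so $S\in\Delta_{I'}$ and hence, by~(a), $S$ is a clique of $I'$. Then $R(S,C_i)\subseteq I'[S\cup C_i]\subseteq I_{C_i}$ with $I'[S\cup C_i]$ chordal, so the minimality of the triangulation $I_{C_i}$ of $R(S,C_i)$ forces $I'[S\cup C_i]=I_{C_i}$; taking the union over $i$ gives $E(I)\subseteq E(I')$, whence $I'=I$.

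For the second half, let $I$ be a \mintri of $G$ and $S\in\Delta_I$; by~(a), $S$ is a clique of $I$, and $S\in\Delta_G$ by Lemma~\ref{lem:well-known2}. For every component $C'$ of $G-S$ the induced subgraph $I[S\cup C']$ is chordal and contains $E(R(S,C'))=E(G[S\cup C'])\cup S^2$, hence is a triangulation of $R(S,C')$, so by~(d) I would fix a \mintri $J_{C'}$ of $R(S,C')$ with $E(J_{C'})\subseteq E(I[S\cup C'])$. Applying the first half to $G$, $S$ and the family $(J_{C'})_{C'\in\CCC_G(S)}$, the graph $J:=(V(G),\bigcup_{C'}E(J_{C'}))$ is a \mintri of $G$; since $E(G)\subseteq E(J)\subseteq E(I)$ and $J$ is chordal, the minimality of $I$ forces $J=I$. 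Intersecting the resulting identity $E(I)=\bigcup_{C'}E(J_{C'})$ with $(S\cup C)^2$, and observing that every edge contributed by a $J_{C'}$ with $C'\neq C$ that lands inside $(S\cup C)^2$ already lies in $S^2\subseteq E(J_C)$, I obtain $E(I[S\cup C])=E(J_C)$; that is, $I[S\cup C]=J_C$ is a \mintri of $R(S,C)$.

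I expect the minimality step of the first half to be the main obstacle, namely forcing an arbitrary chordal graph wedged between $G$ and $I$ to retain $S$ as a clique. This is the one point where the combinatorics of minimal separators really enters, and it is exactly what lets the argument localize to the pieces $R(S,C_i)$. Once it is in place, the chordality of the glued graph (via the clique-sum fact) and the whole second half follow with little further effort; the only residual work is the routine bookkeeping of which edges of the glued graph lie in which piece $S\cup C_i$, which is clean precisely because no such edge ever crosses between two distinct components of $G-S$.
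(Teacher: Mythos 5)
Your proof is correct. Note, however, that the paper itself gives no proof of this lemma: it is imported verbatim as Lemma~3.1 of the cited reference \cite{KloKraSpi97}, so there is no in-paper argument to compare against. Your write-up is essentially a self-contained reconstruction of the standard argument from that reference: glue minimal triangulations of the realizations along the clique $S$, check that no edge of the glued graph crosses between components so that $I[S\cup C_i]=I_{C_i}$, use the clique-sum characterization for chordality, and derive the converse by sandwiching. The one genuinely delicate step --- showing that any chordal $I'$ with $E(G)\subseteq E(I')\subseteq E(I)$ still has $S$ as a clique --- you handle correctly via the two full components of $G-S$ guaranteed for a minimal separator, which forces $S\in\Delta_{I'}$ and hence (Dirac) $S$ to be a clique of $I'$; this is exactly the point where minimality of $S$ as a separator of $G$ is indispensable, and the lemma would fail without it. The bookkeeping in the converse direction (intersecting $E(I)=\bigcup_{C'}E(J_{C'})$ with $(S\cup C)^2$ and absorbing the cross-terms into $S^2$) is also sound. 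In short: a correct, complete proof of a statement the paper only cites.
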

The following lemma is an extension of Theorem~3.2. in \cite{KloKraSpi97} to our
situation.
\begin{lemma} \label{paper2_kor3.1} 
	Let $G$ be a non-complete graph and $f \in \FFF(\vrtG)$ a width function.
	Then
	\begin{equation}\label{eq:minsepalg}
		\ftw(G) = \min_{S \in \Delta_G} \quad \max_{C \in \CCC_G(S)} \ftw(R(S,C)).
	\end{equation}
\end{lemma}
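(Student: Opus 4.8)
The plan is to prove both inequalities of~(\ref{eq:minsepalg}) separately, using Lemma~\ref{lemX2} to pass between $f$-tree-width and minimal triangulations, and Lemma~\ref{lem4.1} to decompose (and recompose) minimal triangulations along a minimal separator.

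For the inequality $\ftw(G) \le \min_{S \in \Delta_G} \max_{C \in \CCC_G(S)} \ftw(R(S,C))$, fix any minimal separator $S \in \Delta_G$. For each component $C \in \CCC_G(S)$, take an $f$-optimal triangulation $I_C$ of $R(S,C)$; by Lemma~\ref{lemX2} such a minimal triangulation exists and satisfies $\fw(I_C) = \ftw(R(S,C))$. By the first part of Lemma~\ref{lem4.1}, the graph $I$ with $E(I) = \bigcup_{C} E(I_C)$ is a minimal triangulation of $G$, so by Lemma~\ref{lemX2} again $\ftw(G) \le \fw(I)$. It remains to argue $\fw(I) = \max_C \fw(I_C)$: every maximal clique of $I$ lies inside some $R(S,C)$ (since $S$ is a clique in each $I_C$ and the only edges across different components run through $S$, a clique of $I$ cannot contain vertices from two distinct components), hence by monotonicity of $f$ the $f$-clique-number of $I$ is the maximum of the $f$-clique-numbers of the pieces. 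Taking the minimum over $S$ gives the desired bound.

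For the reverse inequality, let $I$ be an $f$-optimal minimal triangulation of $G$, which exists by Lemma~\ref{lemX2}, so $\fw(I) = \ftw(G)$. Since $G$ is non-complete, $I$ has at least one minimal separator; pick any $S \in \Delta_I$, which by Lemma~\ref{lem:well-known2}~(ii) also lies in $\Delta_G$. The components of $G - S$ and of $I - S$ coincide (adding chords within an $R(S,C)$ does not merge components), so by the converse part of Lemma~\ref{lem4.1}, $I[S \cup C]$ is a minimal triangulation of $R(S,C)$ for each component $C$ of $G - S$. Then $\ftw(R(S,C)) \le \fw(I[S \cup C]) \le \fw(I)$ by Lemma~\ref{lemX1} and monotonicity (maximal cliques of $I[S\cup C]$ are cliques of $I$). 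Hence $\max_C \ftw(R(S,C)) \le \fw(I) = \ftw(G)$, and a fortiori $\min_{S \in \Delta_G} \max_C \ftw(R(S,C)) \le \ftw(G)$.

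The main obstacle I anticipate is the careful bookkeeping around cliques and components rather than any deep idea: specifically, verifying that a maximal clique of the glued triangulation $I$ cannot straddle two components of $G - S$ (so that $\fw(I) = \max_C \fw(I_C)$), and that the component structure is genuinely unchanged when we triangulate each $R(S,C)$. Both points are standard facts about triangulations and separators, but they must be stated cleanly because the width function $f$ is only assumed monotone, so we may only use inclusion relations between clique sets, never cardinalities. One should also note explicitly why $\Delta_G \ne \emptyset$ when $G$ is non-complete, so that the minimum on the right-hand side of~(\ref{eq:minsepalg}) is over a non-empty set.
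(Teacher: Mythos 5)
Your proposal is correct and follows essentially the same route as the paper's own proof: both directions use Lemma~\ref{lemX2} to pass to minimal triangulations and Lemma~\ref{lem4.1} to glue or restrict them along a minimal separator, with the same clique-localization argument showing $\fw(I) = \max_C \fw(I_C)$. The extra points you flag (non-emptiness of $\Delta_G$, agreement of components of $G-S$ and $I-S$) are legitimate details the paper leaves implicit, but they do not change the argument.
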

\begin{proof}	
	Let $S \in \Delta_G$ be any minimal separator of $G$.  For
        every component $C \in \CCC_G(S)$, let $I_C$ be a minimal
        triangulation of $R(S,C)$ with \mbox{$\ftw(R(S,C)) =
          \fw(I_C)$} as guaranteed by Lemma~\ref{lemX2}.  By
        Lemma~\ref{lem4.1} the graph $I$ on $\vrtG$ with $E(I) :=
        \bigcup_{C \in \CCC_G(S)} E(I_C)$ is a \mintri of $G$.  By
        construction, there can not be an edge in $I$ connecting two
        different components in $\CCC_I(S) = \CCC_G(S)$; also, $S$ is
        a clique in $I$ and in each $I_C$. Thus for every clique
        $\Omega$ of $I$ there is a component $C \in \CCC_G(S)$ with
        $\Omega \subseteq S \cup C$ and $\Omega$ is also a clique of
        $I_C$.  We have thus
	\begin{displaymath}
		\ftw(G)
		\leq \fw(I)
		= \max_{C \in \CCC_G(S)} \fw(I_C)
		= \max_{C \in \CCC_G(S)} \ftw(R(S,C)) \, ,
	\end{displaymath}
	where the left most inequality is given by Lemma~\ref{lemX2}.
	
	Conversely, let $I$ be a minimal triangulation of $G$ that
        minimizes $\fw(I)$ and hence \mbox{$\ftw(G) = \fw(I)$} by
        Lemma~\ref{lemX2}.  Let $S$ be a minimal separator of $I$; by
        Lemma~\ref{lem:well-known2}~(\ref{lem:wk2:minsepi}), we know
        $S \in \Delta_G$. By Lemma~\ref{lem4.1}, we have that $I[S \cup C]$
        is a \mintri of $R(S,C)$ for every component $C \in \CCC_G(S)$
        of $G - S$.  Since every clique of $I[S \cup C]$ is also a
        clique of $I$, we have
	\begin{displaymath}
		\ftw(R(S,C))
		\leq \fw(I[S \cup C])
		\leq \fw(I)
		= \ftw(G).
	\end{displaymath}
	Again, the leftmost inequality follows from Lemma~\ref{lemX2}.
\end{proof}
Lemma~\ref{paper2_kor3.1} provides an equation for the $f$-tree-width
of a graph in terms of its minimal separators. However, it would be
preferable to work only with inclusion-minimal separators and full
blocks. Fortunately, this can be achieved via the following lemma:

\begin{lemma} \label{kor3.1_new}  
  Let $G$ be a non-complete graph and $f \in \FFF(\vrtG)$ a
  width function.  Then
  \begin{displaymath}
    \ftw(G) = \min_{S \in \Delta_G^*} \quad \max_{C \in \CCC_G^*(S)} \ftw(R(S,C)).
  \end{displaymath}
\end{lemma}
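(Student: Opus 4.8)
The plan is to prove the identity by sandwiching $\ftw(G)$ between the two sides, using Lemma~\ref{paper2_kor3.1} as the starting point and then refining the minimum from all minimal separators $\Delta_G$ down to inclusion-minimal ones $\Delta_G^*$, and the maximum from all components $\CCC_G(S)$ down to full components $\CCC_G^*(S)$.

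First I would observe that the ``$\le$'' direction is immediate: since $\Delta_G^* \subseteq \Delta_G$ and, for any $S \in \Delta_G^*$, Lemma~\ref{lem:well-known2}~(\ref{lem:wk2:incminsep}) gives $\CCC_G(S) = \CCC_G^*(S)$, the right-hand side of the claimed equation is just the restriction of the minimum in~(\ref{eq:minsepalg}) to the subset $\Delta_G^*$, hence is at least as large; combined with Lemma~\ref{paper2_kor3.1} this yields $\ftw(G) \le \min_{S \in \Delta_G^*}\max_{C \in \CCC_G^*(S)} \ftw(R(S,C))$. Wait --- restricting a minimum to a subset gives a value that is \emph{at least} the full minimum, so this shows $\ftw(G) \le \text{RHS}$, which is the direction we want.

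For the reverse inequality, let $S \in \Delta_G$ be a minimal separator attaining the minimum in~(\ref{eq:minsepalg}), so that $\ftw(G) = \max_{C \in \CCC_G(S)} \ftw(R(S,C))$. The key step is to replace $S$ by an inclusion-minimal separator $S' \subseteq S$ without increasing this quantity. Pick any $S' \in \Delta_G^*$ with $S' \subseteq S$ (such exists since every minimal separator contains an inclusion-minimal one). By Lemma~\ref{lem:well-known2}~(\ref{lem:wk2:incminsep}), every block associated with $S'$ is full, so $\CCC_G^*(S') = \CCC_G(S')$. I would then argue that each realization $R(S',C')$ for $C' \in \CCC_G(S')$ embeds ``inside'' one of the realizations $R(S,C)$ for an appropriate $C \in \CCC_G(S)$: concretely, since $S' \subseteq S$, the set $S \setminus S'$ together with $S'$ lies in the closed neighborhood structure, and any component $C'$ of $G - S'$ that is not equal to $C \cup (S \setminus S')$ for the relevant $C$ is in fact entirely contained in some $S \cup C$. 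One shows $R(S',C')$ is (isomorphic to) a subgraph of $R(S,C)$ on the vertex set $S' \cup C' \subseteq S \cup C$ in which $S'$ is a clique --- and since $S' \subseteq S$ is a clique in $R(S,C)$ too, $R(S',C')$ is an induced-plus-fill-in subgraph of $R(S,C)$. Then $\ftw(R(S',C')) \le \ftw(R(S,C))$ follows because any triangulation of $R(S,C)$ restricts (via Lemma~\ref{lem4.1}-style reasoning, or Lemma~\ref{lemX2} applied to the appropriate realizations) to a triangulation of $R(S',C')$ whose $f$-clique-number does not increase, using monotonicity of $f$. Taking the maximum over $C' \in \CCC_G^*(S')$ then gives $\max_{C' \in \CCC_G^*(S')} \ftw(R(S',C')) \le \max_{C \in \CCC_G(S)} \ftw(R(S,C)) = \ftw(G)$, and hence the RHS minimum is $\le \ftw(G)$.

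The main obstacle I anticipate is making the ``$R(S',C') \subseteq R(S,C)$'' containment precise and correctly matching up the components: when we shrink $S$ to $S'$, the components of $G-S'$ can merge several components of $G-S$ together with the vertices $S\setminus S'$, so one must carefully identify, for each $C' \in \CCC_G(S')$, a single $C \in \CCC_G(S)$ with $S' \cup C' \subseteq S \cup C$ (and rule out the degenerate possibility that some $C'$ spills over several $S\cup C$'s, which cannot happen precisely because $S$ separates them). Once the vertex-set containment and the clique condition on $S'$ are established, the inequality $\ftw(R(S',C')) \le \ftw(R(S,C))$ is routine given Lemma~\ref{lemX2} and monotonicity of $f$. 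An alternative, possibly cleaner route is to bypass~(\ref{eq:minsepalg}) and instead mimic its proof directly: take a minimal triangulation $I$ of $G$ with $\fw(I) = \ftw(G)$, pick an \emph{inclusion-minimal} separator of $G$ contained in a minimal separator of $I$, and run the same block-decomposition argument using Lemma~\ref{lem4.1} and Lemma~\ref{lem:well-known2}~(\ref{lem:wk2:incminsep}); this avoids the component-matching bookkeeping at the cost of essentially repeating the proof of Lemma~\ref{paper2_kor3.1}.
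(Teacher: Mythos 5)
Your overall plan (start from Lemma~\ref{paper2_kor3.1}, replace an optimal $S\in\Delta_G$ by an inclusion-minimal $S'\subseteq S$, then invoke Lemma~\ref{lem:well-known2}~(\ref{lem:wk2:incminsep}) to pass from $\CCC_G(S')$ to $\CCC_G^*(S')$) is the paper's plan, and your ``$\le$'' direction is fine. But the key step of your ``$\ge$'' direction has a genuine gap: you want to assign to each component $C'\in\CCC_G(S')$ a \emph{single} component $C\in\CCC_G(S)$ with $S'\cup C'\subseteq S\cup C$, and you dismiss the possibility that $C'$ ``spills over several $S\cup C$'s'' as a degenerate case ruled out because $S$ separates them. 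This is exactly backwards: that case is unavoidable whenever $S'\subsetneq S$. A minimal separator $S$ has at least two full components $C_1,C_2$ of $G-S$, and any vertex $v\in S\setminus S'$ has a neighbour in each of them (fullness means $N(C_1)=N(C_2)=S$); since $v$, $C_1$ and $C_2$ all avoid $S'$, they lie in one component $C'$ of $G-S'$, which therefore contains two distinct components of $G-S$ and is contained in no single $S\cup C$. Consequently $R(S',C')$ does not embed into any one $R(S,C)$, and the inequality $\ftw(R(S',C'))\le\max_{C}\ftw(R(S,C))$ cannot be obtained by the subgraph-plus-monotonicity argument you describe.

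The paper handles precisely this situation by a gluing construction rather than an embedding: it writes $C'=S''\cup C_1\cup\dots\cup C_t$ with $S''\subseteq S\setminus S'$ and $C_1,\dots,C_t\in\CCC_G(S)$, takes $f$-optimal tree decompositions of all the blocks $R(S,C_i)$, removes the vertices of $S\setminus(S'\cup S'')$ from every bag, and attaches all of them to one new bag $B_o=S'\cup S''$; monotonicity of $f$ and the fact that $S'\cup S''\subseteq S$ is contained in a bag of each piece give $\ftw(R(S',C'))\le\max_i\ftw(R(S,C_i))$. Your fallback suggestion of rerunning the proof of Lemma~\ref{paper2_kor3.1} with an inclusion-minimal separator of $G$ contained in a minimal separator of $I$ also does not go through as stated: the second half of Lemma~\ref{lem4.1} requires $S$ to be a minimal separator \emph{of $I$}, and an inclusion-minimal separator of $G$ sitting inside some member of $\Delta_I$ need not itself belong to $\Delta_I$. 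So some gluing argument of the paper's kind appears genuinely necessary to complete the proof.
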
 
\begin{proof}
  Suppose the minimum on the right-hand side of~(\ref{eq:minsepalg})
  is achieved only by non-inclusion-minimal separators and let $S \in
  \Delta_G$ be such a separator. Let $S' \subset S$ be an
  inclusion-minimal separator in $\Delta_G$. Consider a component $C
  \in \CCC_G(S')$; it must be that $C = S'' \cup C_1 \cup \dots \cup
  C_t$, where $S'' \subseteq S \setminus S'$ and $C_1,\dots,C_t \in
  \CCC_G(S)$. Let $\TTT_i$ be obtained from an $f$-optimal tree
  decomposition for $R(S,C_i)$, for $1 \leq i \leq t$, by removing the
  vertices of $S \setminus (S' \cup S'')$ from every bag. By creating
  a bag $B_o$ containing $S' \cup S''$ and connecting each one of
  these tree decomposition to it, we obtain a tree decomposition
  $\TTT$ for $R(S',C)$ with $\ftw(R(S',C)) \leq \fwidth(\TTT) \leq
  \max_{1 \leq i \leq t} \ftw(R(S,C_i))$. Since this is true for every
  block associated with $S'$, we obtain a contradiction, i.e.\ the
  minimum is indeed achieved by an inclusion-minimal separator
  $S'$. But then Lemma~\ref{lem:well-known2}~(\ref{lem:wk2:incminsep})
  guarantees that $\CCC_G(S') = \CCC_G^*(S')$. 
\end{proof}

\noindent It remains to show how to compute $f$-optimal tree decompositions of
full blocks. This is done in Lemma~\ref{lem4.8} below by using the
following lemma from~\cite{BouTod01} (cf.\ Fig.~\ref{fig:omegaComponents}):

\begin{lemma}[Theorem 4.7 in \cite{BouTod01}] \label{thm4.7graph} 
  Let $G$ be a graph and $(S,C)$ a full block of $G$.  Then a graph
  $I_R$ is a \mintri of $R(S,C)$ if and only if there is a \pmc
  $\Omega \subseteq S \cup C$ of $G$ with $S \subset \Omega$ such that
  the following holds:
  
  We have $V(I_R) := S \cup C$ and $E(I_R) := \bigcup_{i=1}^p E(I_i)
  \cup \Omega^2$, where $I_i$ is a \mintri of $R(S_i,C_i)$ for each
  block $(S_i,C_i)$ \assoc $\Omega$ in $R(S,C)$.
\end{lemma}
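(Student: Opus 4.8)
The plan is to prove the two directions of the equivalence separately, in both cases transferring the Bouchitt\'e--Todinca recursive structure of minimal triangulations from the realization $R(S,C)$, viewed as the ambient graph, back to $G$. The tools I would rely on are the definition of a \pmc (a maximal clique of some \mintri), the decomposition Lemma~\ref{lem4.1} along a minimal separator, the structural facts of Lemma~\ref{lem:well-known2}, the clique-tree characterization of chordal graphs used already in Lemma~\ref{lemX2}, and the standard property underlying the Bouchitt\'e--Todinca theory that a \pmc $\Omega$ has no \emph{full associated component}, i.e.\ $N(D)\subsetneq\Omega$ for every component $D$ of $G-\Omega$.

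\emph{Backward direction ($\Leftarrow$).} Assume $\Omega$ is a \pmc of $G$ with $S\subsetneq\Omega\subseteq S\cup C$, and that each $I_i$ is a \mintri of $R(S_i,C_i)$ for the blocks $(S_i,C_i)$ \assoc $\Omega$ in $R(S,C)$; set $E(I_R):=\bigcup_i E(I_i)\cup\Omega^2$. First I would check $E(R(S,C))\subseteq E(I_R)$: every edge of $R(S,C)$ lies either inside $\Omega$ (hence in $\Omega^2$, using $S\subseteq\Omega$) or inside some $S_i\cup C_i$ (hence in $E(I_i)\supseteq E(R(S_i,C_i))$), since a vertex of $C_i$ has no neighbour in $\Omega\setminus S_i$ by $N(C_i)=S_i$. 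To see $I_R$ is chordal I would build a clique tree by gluing the clique trees of the $I_i$ to a single node $\Omega$, attaching each to a maximal clique of $I_i$ that contains the clique $S_i$. Here $\Omega$ is a maximal clique of $I_R$ precisely because $S_i=N(C_i)\subsetneq\Omega$ for every $i$ by the no-full-component property, so no vertex outside $\Omega$ is adjacent to all of $\Omega$; the clique-intersection property then holds because distinct $C_i$ lie in distinct components and each $I_i$ already is a clique tree. For minimality I would argue that no fill edge of $I_R$ can be removed: edges inside a fixed $S_i\cup C_i$ cannot be dropped because $I_i$ is a \emph{minimal} triangulation of $R(S_i,C_i)$, and a missing edge of $\Omega^2$ is forced since deleting it would create a chordless cycle through two non-adjacent vertices of $\Omega$ routed through a component whose neighbourhood contains both. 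I expect this minimality bookkeeping, rather than chordality, to be the main obstacle in this direction.

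\emph{Forward direction ($\Rightarrow$).} Assume $I_R$ is a \mintri of $R(S,C)$. Since $S$ is a clique of $R(S,C)$, hence of $I_R$, Lemma~\ref{lem:well-known}~(i) applied to a clique tree of $I_R$ yields a maximal clique $\Omega\supseteq S$ of $I_R$; being a maximal clique of a \mintri, $\Omega$ is a \pmc of $R(S,C)$. I would then establish $S\subsetneq\Omega$: if $\Omega=S$ then $C$ would be a full component of $R(S,C)-\Omega$ with $N(C)=S=\Omega$, contradicting the no-full-component property. Next I would transfer $\Omega$ to $G$: because $(S,C)$ is a full block and $\Omega\supseteq S$, the graphs $R(S,C)$ and $G$ have the same adjacencies and the same components with the same neighbourhoods inside $C$ below $\Omega$, so $\Omega$ is also a \pmc of $G$, and the blocks \assoc $\Omega$ in $R(S,C)$ are exactly full blocks $(S_i,C_i)$ of $G$ with $C_i\subseteq C$ and $S_i\in\Delta_G$ by Lemma~\ref{lem:well-known2}. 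Finally, the decomposition of $E(I_R)$ follows from the converse part of Lemma~\ref{lem4.1}: each $N_{I_R}(C_i)$ is a minimal separator of $I_R$, so $I_R[S_i\cup C_i]$ is a \mintri of its realization, while the edges of $I_R$ meeting $\Omega$ are exactly $\Omega^2$ because $\Omega$ is a clique and $N_{I_R}(C_i)=N(C_i)=S_i$; reading off $E(I_R)=\bigcup_i E(I_i)\cup\Omega^2$ then finishes the argument.

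The step I expect to be hardest is the faithful translation between the potential maximal cliques and blocks of $G$ and those of the realization $R(S,C)$, together with the identity $N_{I_R}(C_i)=N(C_i)=S_i$ in the forward direction. One must be sure that the fill edges of a minimal triangulation neither enlarge the neighbourhood of a component nor merge two blocks, which is exactly the finer part of the Bouchitt\'e--Todinca analysis that lies beyond what Lemma~\ref{lem:well-known2} records; isolating this as a separate auxiliary claim is how I would contain the difficulty.
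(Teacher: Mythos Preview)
The paper does not prove this lemma at all: it is quoted verbatim as Theorem~4.7 of \cite{BouTod01} and used as a black box in the proof of Lemma~\ref{lem4.8}. There is therefore no ``paper's own proof'' to compare your attempt against.

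For what it is worth, your sketch follows the intended Bouchitt\'e--Todinca route and correctly isolates the one genuinely nontrivial point, namely that a \pmc of $R(S,C)$ containing $S$ is also a \pmc of $G$ (and conversely), and that the associated blocks coincide. In \cite{BouTod01} this transfer is not done by directly comparing minimal triangulations of $G$ and of $R(S,C)$, but via their intrinsic characterization of potential maximal cliques (their Theorem~3.15): $\Omega$ is a \pmc of a graph iff $G-\Omega$ has no full associated component and completing the neighbourhoods $N(C_i)$ of the components of $G-\Omega$ turns $\Omega$ into a clique. Since $R(S,C)$ and $G$ agree on $S\cup C$ except for the added clique on $S\subseteq\Omega$, and since every component of $G-\Omega$ outside $C$ has neighbourhood contained in $S\subsetneq\Omega$, both conditions transfer immediately between the two graphs. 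If you want to turn your sketch into a self-contained proof, invoking that characterization is the cleanest way to discharge the step you flagged as hardest; your alternative, arguing directly that fill edges of a \mintri do not enlarge component neighbourhoods, essentially reproves part of that characterization.
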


\begin{lemma} \label{lem4.8}  
  Let $G$ be a graph, $(S,C)$ a full block of $G$, and $f \in
  \FFF(\vrtG)$ a width function.  Then
  \begin{displaymath}
    \ftw(R(S,C)) = \min_{\substack{\Omega \in \Pi_G, \\ S \subset \Omega \subseteq S \cup C}} \quad \max_i \{f(\Omega), \ftw(R(S_i,C_i))\} \, ,
  \end{displaymath}
  where the maximum is taken over all blocks $(S_i,C_i)$ \assoc $\Omega$ in $R(S,C)$.
\end{lemma}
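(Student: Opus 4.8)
The plan is to prove both inequalities separately, leaning on Lemma~\ref{lemX2}, Lemma~\ref{paper2_kor3.1}/Lemma~\ref{kor3.1_new}, and the characterization of minimal triangulations of full blocks in Lemma~\ref{thm4.7graph} (Theorem~4.7 of \cite{BouTod01}). Throughout, minimal triangulations will be the bridge: by Lemma~\ref{lemX2} applied to $R(S,C)$, $\ftw(R(S,C)) = \min_{I_R} \fw(I_R)$ where $I_R$ ranges over minimal triangulations of $R(S,C)$, and likewise for each realization $R(S_i,C_i)$.

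\textbf{The $\geq$ direction.} First I would take a minimal triangulation $I_R$ of $R(S,C)$ achieving $\ftw(R(S,C)) = \fw(I_R)$. By Lemma~\ref{thm4.7graph} there is a \pmc $\Omega$ of $G$ with $S \subset \Omega \subseteq S \cup C$ and $E(I_R) = \bigcup_i E(I_i) \cup \Omega^2$, where $I_i$ is a \mintri of $R(S_i,C_i)$ for each block $(S_i,C_i)$ \assoc $\Omega$ in $R(S,C)$. I must show every clique of $I_R$ either equals (a subset of) $\Omega$ or lies inside some $S_i \cup C_i$ and is a clique of $I_i$. This follows from the block structure: the sets $C_i$ are the components of $R(S,C) - \Omega$ (equivalently of $(S\cup C)\setminus\Omega$), $S_i = N(C_i) \subseteq \Omega$, and $I_R$ adds no edges between distinct $C_i$'s nor between a $C_i$ and $\Omega\setminus S_i$; also $\Omega$ is a clique in $I_R$. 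So any clique $\Omega'$ of $I_R$ that is not contained in $\Omega$ must meet some $C_i$, hence (being a clique and $C_i$ being a component with neighborhood $S_i$) be contained in $S_i\cup C_i$, and then it is a clique of $I_i$. Using monotonicity of $f$ this gives
\begin{displaymath}
\fw(I_R) \;=\; \max\Bigl\{ f(\Omega),\; \max_i \fw(I_i) \Bigr\} \;=\; \max\Bigl\{ f(\Omega),\; \max_i \ftw(R(S_i,C_i)) \Bigr\},
\end{displaymath}
where the last equality uses Lemma~\ref{lemX2} for the inner terms (each $I_i$ is a \mintri of $R(S_i,C_i)$, but I only need $\ftw(R(S_i,C_i)) \le \fw(I_i)$ here, which is Lemma~\ref{lemX2}). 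Since $\Omega$ is a feasible choice on the right-hand side of the claimed equation, this shows $\ftw(R(S,C)) \ge \min_\Omega \max_i\{f(\Omega),\ftw(R(S_i,C_i))\}$.

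\textbf{The $\leq$ direction.} Conversely, fix any \pmc $\Omega \in \Pi_G$ with $S \subset \Omega \subseteq S \cup C$, and for each block $(S_i,C_i)$ \assoc $\Omega$ in $R(S,C)$ pick a \mintri $I_i$ of $R(S_i,C_i)$ with $\fw(I_i) = \ftw(R(S_i,C_i))$ (Lemma~\ref{lemX2}). By Lemma~\ref{thm4.7graph} the graph $I_R$ with $E(I_R) = \bigcup_i E(I_i) \cup \Omega^2$ is a \mintri of $R(S,C)$, so $\ftw(R(S,C)) \le \fw(I_R)$ by Lemma~\ref{lemX2}. Running the same clique-analysis as above on this particular $I_R$ gives $\fw(I_R) = \max\{f(\Omega), \max_i \ftw(R(S_i,C_i))\}$. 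Minimizing over all admissible $\Omega$ yields $\ftw(R(S,C)) \le \min_\Omega \max_i\{f(\Omega),\ftw(R(S_i,C_i))\}$, completing the proof. (Technically one should handle the degenerate case where $R(S,C) - \Omega$ has no components — then $\Omega = S\cup C$, there are no blocks $(S_i,C_i)$, $I_R = R(S,C)$ already, and the formula reads $\ftw(R(S,C)) = f(S\cup C)$, consistent with Lemma~\ref{lemX1} since $R(S,C)$ is then complete.)

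\textbf{Main obstacle.} The heart of the argument — and the only place requiring care — is the clique-decomposition claim: that every maximal clique of the triangulation $I_R$ built from $\Omega$ and the $I_i$'s is either a subset of $\Omega$ or a clique of exactly one $I_i$. This is a purely structural fact about how Lemma~\ref{thm4.7graph} glues triangulations along the \pmc $\Omega$, relying on the fact that the $C_i$ are precisely the connected components of $(S\cup C)\setminus\Omega$ with $N(C_i)=S_i\subseteq\Omega$ and that no edges are introduced across different components; once this is established, monotonicity of $f$ does the rest and the two inequalities are symmetric applications of Lemma~\ref{thm4.7graph} together with Lemma~\ref{lemX2}. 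Everything else is bookkeeping.
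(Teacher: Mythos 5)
Your proof is correct, and the $\geq$ direction is essentially the paper's argument (a minimizing minimal triangulation $I_R$ of $R(S,C)$, Lemma~\ref{thm4.7graph} to extract $\Omega$ and the $I_i$, then $f(\Omega)\le\fw(I_R)$ and $\ftw(R(S_i,C_i))\le\fw(I_i)\le\fw(I_R)$ via $I_R[S_i\cup C_i]=I_i$). Where you genuinely diverge is the $\leq$ direction: you invoke the ``if'' half of Lemma~\ref{thm4.7graph} to glue $f$-optimal minimal triangulations $I_i$ of the blocks into a minimal triangulation $I_R$ of $R(S,C)$, and then need the full clique-decomposition claim --- every clique of $I_R$ is contained in $\Omega$ or is a clique of exactly one $I_i$ --- to conclude $\fw(I_R)\le\max\{f(\Omega),\max_i\ftw(R(S_i,C_i))\}$. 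The paper instead sidesteps that structural claim entirely in this direction: it takes $f$-optimal \emph{tree decompositions} $\TTT_i$ of the $R(S_i,C_i)$, attaches each to a fresh bag $B_t=\Omega$ via a bag $t_i$ containing the clique $S_i$ (which exists by Lemma~\ref{lem:well-known}~(i)), and checks directly that the result is a tree decomposition of $R(S,C)$ of width $\max\{f(\Omega),\max_i\ftw(R(S_i,C_i))\}$. Your route is more symmetric (both halves run through Lemma~\ref{thm4.7graph}) but puts all the weight on the clique-decomposition claim, which you correctly identify and whose sketch is sound: a clique of $I_R$ meeting some $C_i$ lies in $\{v\}\cup N_{I_R}(v)\subseteq S_i\cup C_i$ since no added edge leaves $S_i\cup C_i$ from $C_i$, and $I_R[S_i\cup C_i]=I_i$ because $S_i$ is already a clique in $I_i$. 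One cosmetic caveat: in the $\geq$ direction you write $\max_i\fw(I_i)=\max_i\ftw(R(S_i,C_i))$, which need not be an equality for an arbitrary minimal triangulation $I_i$; as your parenthetical concedes, only the inequality $\ftw(R(S_i,C_i))\le\fw(I_i)$ is available and needed there, so the displayed chain should read as an inequality at that step.
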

\begin{proof}
  Let $I_R$ be a minimal triangulation of $R(S,C)$, that minimizes
  $\fw(I_R)$.  By Lemma~\ref{lemX2}, we have \mbox{$\ftw(R(S,C)) =
    \fw(I_R)$}.  Lemma~\ref{thm4.7graph} implies the existence of a
  \pmc $\Omega \subseteq (S,C)$ of $G$ with $S \subset \Omega$ such
  that the following is true:
	
  For each block $(S_i,C_i)$ \assoc $\Omega$ in $R(S,C)$ there is a
  minimal triangulation $I_i$ of $R(S_i,C_i)$ such that $I_R = (S \cup
  C, E(I_R))$ with $E(I_R) := \bigcup_{i=1}^p E(I_i) \cup \Omega^2$.
  Clearly $\Omega$ is a clique in $I_R$ and hence $f(\Omega) \leq
  \fw(I_R)= \ftw(R(S,C))$.
	
  Now let $(S_i,C_i)$ be any block \assoc $\Omega$ in $R(S,C)$.  By
  definition, $S_i$ is a clique in $R(S_i,C_i)$ and therefore also in
  $I_i$ and $I_R$.  Hence, $I_R[S_i \cup C_i] = I_i$ by definition of
  $E(I_R)$.  Thus, every clique of $I_i$ is also a clique of $I_R$ and
  we have
  \begin{displaymath}
    \ftw(R(S_i,C_i))
    \leq \fw(I_i)
    \leq \fw(I_R)
    = \ftw(R(S,C)) \, .
  \end{displaymath}
  The leftmost inequality holds by Lemma~\ref{lemX2}.

  For the other direction let $\Omega$ be some potential maximal
  clique of $G$ satisfying $S \subset \Omega \subseteq S \cup C$ and
  define $w := \max_i \{f(\Omega), \ftw(R(S_i,C_i))\}$.  The
  existence of such an $\Omega$ is guaranteed by
  Lemma~\ref{thm4.7graph}.  Let $\TTT_i = (T_i, (B^i_t)_{t \in
    V(T_i)})$ be $f$-optimal tree decompositions of $R(S_i,C_i)$. Each
  $S_i$ is a clique in $R(S_i,C_i)$. Thus there is a vertex $t_i \in
  V(T_i)$ with $S_i \subseteq B^i_{t_i}$ by
  Lemma~\ref{lem:well-known}~(i).  We construct a tree decomposition
  $\TTT$ of $R(S,C)$ as the union of the tree decompositions $\TTT_i$,
  adding a new vertex $t$ and the new edges $\{t, t_i\}$.  We define
  the bag of $t$ to be $B_t := \Omega$.
	
  The sets $C_i$ are the components of $R(S,C) - \Omega$. Therefore
  the realizations $R(S_i,C_i)$ do only intersect in the sets $S_i
  \subseteq \Omega$.  Hence, every edge $e$ of $R(S,C)$ is either
  contained in $\Omega$ -- and thus in $B_t$ -- or belongs to one of
  the realizations $R(S_i,C_i)$ and so, must be contained in a bag of
  the tree decomposition $\TTT_i$. We conclude that $\TTT$ is a tree
  decomposition of $R(S,C)$ with $w = \fwidth(\TTT) \geq
  \ftw(R(S,C))$. 
\end{proof}
Combining the statements of Lemmas~\ref{kor3.1_new} and \ref{lem4.8}
we construct Algorithm~\ref{algo_ftw}. Note that
Lemma~\ref{lem:well-known2}~(\ref{lem:wk2:omegablock}) and
Lemma~\ref{kor3.1_new} justify considering only full blocks in this
algorithm.

\begin{algorithm}
	\caption{$\ftw(G, f \in \FFF(\vrtG), \Delta_G, \Pi_G)$} \label{algo_ftw}
	\begin{algorithmic}[1]
		\STATE compute all full blocks $(S,C)$ and sort them by size
		\FORALL {full blocks $(S,C)$ in increasing order} \label{algo_ftw_line_2}
			\IF {$(S,C)$ is inclusion-minimal}
				\STATE $\ftw(R(S,C)) := f(S \cup C)$
			\ELSE
				\STATE $\ftw(R(S,C)) := \infty$
			\ENDIF
			\FORALL {potential maximal cliques $\Omega \in \Pi_G$ with $S \subset \Omega \subseteq (S,C)$}
				\STATE compute the full blocks $(S_i,C_i)$ associated with $\Omega$ s.t. $S_i \cup C_i \subseteq S \cup C$
				\STATE $\ftw(R(S,C)) := \min \{\ftw(R(S,C)), \max_i \{f(\Omega), \ftw(R(S_i,C_i))\}\}$
			\ENDFOR
		\ENDFOR \label{algo_ftw_line_12}
		\STATE $\ftw(G) := \min_{S \in \Delta_G^*} \quad \max_{C \in \CCC_G^*(S)} \ftw(R(S,C))$ \label{algo_ftw_line_13}
	\end{algorithmic}
\end{algorithm}

\subsection{Runtime Analysis} 
As Algorithm~\ref{algo_ftw} is an adaptation of the algorithm presented in~\cite{FomKraTodVil08}
the runtime analysis will follow closely the analysis in that paper.
However our situation necessitates a bit of preparation.
Consider some input of the algorithm consisting of $f_H$ for some hypergraph
$H = (V(H),E(H))$ and a graph $G = (V(G),E(G))$ with $V(G) = V(H)$.
It will be convenient to separate the actual running time of the algorithm 
from the time to compute the function $f_H$ on all relevant subsets of $V(G)$. 
To this end, let a \emph{table of $f_H$ w.r.t.\ $G$} be a list 
of all inclusion minimal full blocks $(S,C)$ of $G$ and all potential
maximal cliques $\Omega$ together with
the values of $f_H(S\cup C)$ and $f_H(\Omega)$, respectively, for each
of these. We obtain the following result.

\begin{theorem} \label{thmMain}  Let $H $ be a hypergraph
  with $n := |\vrtH|$, $m = |\edgeH|$, and $f_H \in \FFF(\vrtH)$ a
  width function.  Let $t(m,n)$ be an upper bound for the time needed
  to compute a table of $f_H$ w.r.t.\ the Gaifman graph $\gaif{H}$.
  Then there is an algorithm that computes $\afhtw{f_H}(H)$ together
  with an $f_H$-optimal tree decomposition of $H$ in time
  $\Oof(1.734601^n + t(m,n) + mn^2)$.
\end{theorem}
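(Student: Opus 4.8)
The plan is to combine the structural equations from Lemmas~\ref{MyLem01}, \ref{kor3.1_new}, and \ref{lem4.8} with the known enumeration bounds for minimal separators and potential maximal cliques, and show that Algorithm~\ref{algo_ftw}, run on the Gaifman graph $\gaif{H}$ with the width function $f_H$, correctly computes $\afhtw{f_H}(H)$ within the claimed time bound. By Lemma~\ref{MyLem01} it suffices to compute $\aftw{f_H}(\gaif{H})$, so from now on we work entirely with the graph $G := \gaif{H}$ on $n$ vertices. First I would argue correctness: a straightforward induction on block size shows that when the outer loop of Algorithm~\ref{algo_ftw} processes a full block $(S,C)$, all strictly smaller full blocks --- in particular all blocks $(S_i,C_i)$ associated with any $\Omega$ with $S\subset\Omega\subseteq S\cup C$ --- have already received their correct value $\ftw(R(S_i,C_i))$; the update in the inner loop then realizes exactly the minimum of Lemma~\ref{lem4.8}, with the base case (inclusion-minimal blocks) handled by the direct assignment $f_H(S\cup C)$, which is correct because such a block induces a complete realization by Lemma~\ref{lem:well-known2}~(\ref{lem:wk2:incminsep}) together with Lemma~\ref{lemX1}. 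The final line~\ref{algo_ftw_line_13} then returns $\aftw{f_H}(G)$ by Lemma~\ref{kor3.1_new}, and Lemma~\ref{lem:well-known2}~(\ref{lem:wk2:omegablock}) justifies restricting attention to full blocks throughout; recovering an actual $f_H$-optimal tree decomposition is done by unwinding the recursion exactly as in the proof of Lemma~\ref{lem4.8} (gluing the stored sub-decompositions along a fresh bag $\Omega$).

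Next I would do the runtime accounting, following \cite{FomKraTodVil08}. The key external facts are that a graph on $n$ vertices has $\Oof(1.6181^n)$ minimal separators and $\Oof(1.7347^n)$ potential maximal cliques, and that both sets, as well as all full blocks, can be enumerated within $\poly$ time per object; the number of full blocks is thus $\Oof(1.6181^n)\cdot\poly$. The dominant cost is the double loop: for each potential maximal clique $\Omega$ (there are $\Oof(1.7347^n)$ of them) and each full block $(S,C)$ with $S\subset\Omega\subseteq S\cup C$, we perform a constant number of comparisons and one table lookup. A counting argument as in \cite{FomKraTodVil08} shows the total number of (block, $\Omega$) pairs considered is $\Oof(1.734601^n)$ up to polynomial factors --- this is where the somewhat mysterious constant $1.734601$ comes from, being essentially $1.7347$ with polynomial slack absorbed into the last digits per the paper's rounding convention. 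The term $t(m,n)$ accounts for populating the table of $f_H$ values on all inclusion-minimal full blocks and all potential maximal cliques, and the additive $mn^2$ covers the one-time construction of $\gaif{H}$ from $H$ (each of the $m$ hyperedges contributes at most $n^2$ Gaifman edges).

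I expect the main obstacle to be the runtime bookkeeping rather than correctness: one must verify that every quantity the algorithm touches --- full blocks, the blocks $(S_i,C_i)$ associated with a given $\Omega$ inside a given $R(S,C)$, the sorting step, and the final minimization over $\Delta_G^*$ and $\CCC_G^*(S)$ --- can be produced with only polynomial overhead per object, so that nothing outside the $\Oof(1.734601^n)$ potential-maximal-clique enumeration blows up. A subtle point worth making explicit is that the algorithm as stated assumes the lists $\Delta_G$ and $\Pi_G$ are given as input; in the proof of Theorem~\ref{thmMain} these must be computed, which is exactly where the $\Oof(1.734601^n)$ and $\Oof(1.6181^n)$ enumeration bounds from \cite{BouTod01,FomKraTodVil08} are invoked, and I would cite those rather than reprove them. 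Finally, since $f_H$ is never evaluated inside the exponential loop (only looked up from the precomputed table), the separation of $t(m,n)$ from the $1.734601^n$ term is clean, and the stated bound follows by adding the three contributions.
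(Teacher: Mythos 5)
Your proposal matches the paper's proof in all essentials: both reduce the problem to computing $\aftw{f_H}(\gaif{H})$ via Lemma~\ref{MyLem01}, establish correctness of Algorithm~\ref{algo_ftw} from Lemmas~\ref{kor3.1_new}, \ref{lem4.8} and \ref{lem:well-known2}, and obtain the running time from the enumeration bounds of Lemma~\ref{lem:fomvil08} together with the table cost $t(m,n)$ and the $\Oof(mn^2)$ construction of the Gaifman graph, deferring the detailed per-block accounting to \cite{FomKraTodVil08} exactly as the paper does. Two cosmetic nits: the base case is justified because an inclusion-minimal full block has a complete realization (so the single-bag decomposition of width $f(S\cup C)$ is optimal), not by Lemma~\ref{lem:well-known2}~(\ref{lem:wk2:incminsep}) with Lemma~\ref{lemX1}; and $1.734601$ is the actual bound on $|\Pi_G|$ from \cite{FomVil10}, not a rounding artifact of $1.7347$.
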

The proof of this theorem readily follows from Lemma~\ref{lemAlgo} below,
the fact that the Gaifman graph of $H$ can be computed in time
$\Oof(mn^2)$, and the following results of
\cite{FomVil08} and \cite{FomVil10}:
\begin{lemma}[\cite{FomVil08,FomVil10}]\label{lem:fomvil08}
  For every graph $G$ on $n$ vertices the following is true.  We have
  $|\Delta_G| = \Oof(1.6181^n)$ and $|\Pi_G| = \Oof(1.734601^n)$.
  Furthermore, all minimal separators and all potential maximal
  cliques can be listed in time $\Oof(1.734601^n)$.
\end{lemma}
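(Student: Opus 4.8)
The statement is a purely combinatorial counting bound together with a matching enumeration guarantee, so the plan is to first establish the extremal bounds on $|\Delta_G|$ and $|\Pi_G|$ and then argue that the counting arguments are constructive enough to yield the claimed listing times. The engine of the whole argument is a single lemma bounding the number of connected vertex sets with a prescribed neighborhood size: for a fixed vertex $v$ and integers $b,f \ge 0$, the number of connected sets $B \subseteq \vrtG$ with $v \in B$, $|B| = b+1$ and $|N(B)| = f$ is at most $\binom{b+f}{b}$. I would prove this by a branching/expansion argument: grow $B$ outward from $v$, and encode each admissible set by a sequence of $b+f$ binary decisions --- at each step either absorbing a boundary vertex into $B$ (a ``$B$-step'', of which there are $b$) or permanently freezing it into $N(B)$ (an ``$N$-step'', of which there are $f$); distinct sets yield distinct decision sequences, giving the $\binom{b+f}{b}$ bound.

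Next I would use this lemma to bound $|\Delta_G|$. Every minimal separator $S$ has at least two full components, so $S = N(C)$ for a connected full component $C \in \CCC_G^*(S)$, and $S$ is recovered from $C$ as $N(C)$. Mapping each $S$ to one such $C$ is injective (since $N(C)$ is determined by $C$), reducing the count of separators to a count of connected sets $C$ with $|N(C)| = |S|$. The crucial refinement is that, choosing $C$ to be one of minimum size among the (at least two) disjoint full components, we may assume $|C| \le (n-|S|)/2$. Writing $|C| = \alpha n$ and $|S| = \sigma n$ with $2\alpha + \sigma \le 1$, the bound becomes $\binom{\alpha n + \sigma n}{\alpha n}$ summed over the admissible range; a routine maximization of $(\alpha+\sigma)\,\eta\!\left(\tfrac{\alpha}{\alpha+\sigma}\right)$, where $\eta$ is the binary entropy, under the constraint $2\alpha + \sigma \le 1$ attains its maximum near $\alpha \approx 0.28$, yielding the base $1.6181 \approx (1+\sqrt5)/2$ and hence $|\Delta_G| = \Oof(1.6181^n)$.

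For $|\Pi_G|$ the same counting lemma is the tool, but the reduction is more delicate and is where the main obstacle lies. Here I would invoke the Bouchitt\'e--Todinca characterization of potential maximal cliques (the same source as Lemma~\ref{thm4.7graph}): for $\Omega \in \Pi_G$, every block $(N(C),C)$ with $C \in \CCC_G(\Omega)$ is non-full, and $\Omega = \bigcup_{C} N(C)$. This lets one attach to each $\Omega$ a connected generator together with a bounded amount of auxiliary information (a vertex and, where needed, an associated minimal separator), so that $\Omega$ is reconstructible from a connected set and $O(1)$ additional choices. The difficulty is that the resulting size constraint is weaker than the clean $2\alpha + \sigma \le 1$ available for separators, so the analogous entropy optimization now settles at the larger explicit base $1.734601$; pinning the constant down tightly --- rather than the trivial $2^n$ --- requires exactly this careful generation scheme and its accompanying constrained binomial estimate, and this is the step I would expect to absorb most of the work.

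Finally, enumeration. Both the connected-set lemma and the two reductions above are algorithmic: the expansion process realizing the $\binom{b+f}{b}$ bound enumerates the relevant connected sets in time proportional, up to a polynomial factor, to their number, while membership in $\Delta_G$ (check that $G-S$ has at least two full components) and in $\Pi_G$ (the Bouchitt\'e--Todinca test) is decidable in polynomial time per candidate. Running the separator generation therefore lists all of $\Delta_G$ in time $\Oof(1.6181^n)$, and running the potential-maximal-clique generation \emph{directly} --- rather than naively combining pairs of separators, which would cost $\Oof(1.6181^{2n})$ and overshoot the target --- lists all of $\Pi_G$ within $\Oof(1.734601^n)$, as claimed.
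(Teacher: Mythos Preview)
The paper does not prove this lemma at all: it is quoted verbatim as a result of Fomin and Villanger \cite{FomVil08,FomVil10} and used as a black box in the proof of Theorem~\ref{thmMain}. So there is no ``paper's own proof'' to compare against; any proof you supply is necessarily external to the present paper.

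That said, your sketch is faithful to the Fomin--Villanger arguments you would find by following the citations. The combinatorial core --- the $\binom{b+f}{b}$ bound on connected vertex subsets with prescribed size and boundary size, proved by a branching enumeration --- is exactly the engine they use, and your derivation of the $|\Delta_G| = \Oof(1.6181^n)$ bound via the smallest full component and the entropy optimization under $2\alpha+\sigma\le 1$ is the standard one. For $|\Pi_G|$ you correctly anticipate that the reduction is subtler; in the cited work the $1.734601^n$ bound comes from a case split (whether $\Omega$ is determined by a minimal separator plus a single vertex, or is a so-called ``nice'' potential maximal clique handled directly by the connected-set lemma under a different size constraint), and the constant arises from balancing those cases. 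Your description of the PMC step is admittedly schematic --- ``a connected generator together with a bounded amount of auxiliary information'' glosses over the actual case analysis --- but the shape of the argument and the reason for the larger base are right. The enumeration claim likewise matches: the connected-set branching is constructive, and membership in $\Delta_G$ and $\Pi_G$ is polynomially testable per candidate.
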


\begin{lemma} \label{lemAlgo} 
  Let $G$ be a graph with $n := |\vrtG|$ and $f \in \FFF(\vrtG)$ a
  width function.  Given the lists of all minimal separators
  $\Delta_G$ and of all potential maximal cliques $\Pi_G$ of $G$ and
  given a table of $f$ w.r.t.\ $G$, Algorithm~\ref{algo_ftw} computes
  $\ftw(G)$ together with an $f$-optimal tree decomposition of $G$ in
  time ${\Oof(n^2 \cdot |\Delta_G| + n^3 \cdot |\Pi_G|)}$.
\end{lemma}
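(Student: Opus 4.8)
The plan is to analyze Algorithm~\ref{algo_ftw} line by line, bounding the number of iterations of each loop by the sizes of the precomputed lists $\Delta_G$, $\Pi_G$ (and hence, via Lemma~\ref{lem:fomvil08}, by exponential quantities) and bounding the cost of each iteration by a polynomial in $n$. Correctness follows directly by combining Lemma~\ref{kor3.1_new} and Lemma~\ref{lem4.8}, together with Lemma~\ref{lem:well-known2}~(\ref{lem:wk2:omegablock}) and (\ref{lem:wk2:incminsep}) which justify restricting attention to full blocks and to full blocks of inclusion-minimal separators; I would remark that the dynamic-programming order is well-founded because whenever $\Omega$ with $S \subset \Omega \subseteq S \cup C$ gives blocks $(S_i, C_i)$ associated with $\Omega$ in $R(S,C)$, one has $|S_i \cup C_i| < |S \cup C|$ (since $\Omega \setminus S_i$ is nonempty and contained in $S \cup C$ but disjoint from $C_i$), so the values $\ftw(R(S_i,C_i))$ referenced on line 10 have already been computed. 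The $f$-optimal tree decomposition itself is recovered by the standard trace-back: store with each block the minimizing $\Omega$ and glue the stored decompositions of the sub-blocks as in the proof of Lemma~\ref{lem4.8}; this adds only polynomial overhead.

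For the running time, I would first observe that the number of full blocks is $\Oof(n \cdot |\Delta_G|)$, since a block is a pair $(S, C)$ with $S$ a minimal separator and $C$ one of at most $n$ components of $G - S$, and that all full blocks can be enumerated and sorted by $|S \cup C|$ in time $\Oof(n^2 \cdot |\Delta_G|)$ using that each component and its neighborhood can be found by graph search in time $\Oof(n^2)$. The outer loop (lines 2--12) runs over these $\Oof(n \cdot |\Delta_G|)$ full blocks. Inside, the inner loop runs over all $\Omega \in \Pi_G$ with $S \subset \Omega \subseteq S \cup C$; checking this containment and computing the full blocks $(S_i, C_i)$ associated with $\Omega$ inside $R(S,C)$ costs $\Oof(n^2)$ per pair (again by graph search for the components of $R(S,C) - \Omega$ and their neighborhoods), and looking up $f(\Omega)$ and each $\ftw(R(S_i,C_i))$ in the table and in the DP store is $\Oof(n)$ with the at most $n$ sub-blocks. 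So the inner loop contributes $\Oof(n \cdot |\Delta_G|) \cdot \Oof(|\Pi_G|) \cdot \Oof(n^2) = \Oof(n^3 \cdot |\Delta_G| \cdot |\Pi_G|)$ in this naive accounting, which is too weak.

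The main obstacle, therefore, is exactly the point where the original analysis of \cite{FomKraTodVil08} is needed: one must avoid the product $|\Delta_G| \cdot |\Pi_G|$ and instead charge the work more carefully. The key combinatorial fact (from \cite{FomKraTodVil08}) is that the total number of triples $(S, C, \Omega)$ examined across the whole algorithm — i.e.\ the number of (full block, enclosed potential maximal clique) incidences — is $\Oof(n \cdot |\Pi_G|)$, because each potential maximal clique $\Omega$ is enclosed in $S \cup C$ for at most $\Oof(n)$ full blocks $(S,C)$: the relevant block, if any, is essentially determined by $\Omega$ and the choice of one of the at most $n$ components of $G - \Omega$ that is "used up" by $C$, so $\Omega$ is associated, as a potential maximal clique, with only a bounded-per-vertex family of blocks. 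Granting this, the inner loop does $\Oof(n \cdot |\Pi_G|)$ iterations in total at $\Oof(n^2)$ each, giving $\Oof(n^3 \cdot |\Pi_G|)$; together with the $\Oof(n^2 \cdot |\Delta_G|)$ for enumerating, sorting, and handling the inclusion-minimal base cases and the final line~13 (which ranges over $S \in \Delta_G^*$ and its $\Oof(n)$ full components at $\Oof(n)$ lookup cost, i.e.\ $\Oof(n^2 \cdot |\Delta_G|)$), this yields the claimed bound $\Oof(n^2 \cdot |\Delta_G| + n^3 \cdot |\Pi_G|)$. I would present this block-counting lemma explicitly (citing or adapting the argument of \cite{FomKraTodVil08}), since it is the crux that turns the naive product bound into the stated one; everything else is routine bookkeeping over graph searches.
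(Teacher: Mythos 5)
Your proposal is correct and follows the same route as the paper: correctness via Lemmas~\ref{kor3.1_new}, \ref{lem4.8} and \ref{lem:well-known2}, and the running time via the charging argument of \cite{FomKraTodVil08}. The paper simply states that ``the proof of the running time is the same as in \cite{FomKraTodVil08}''; you reconstruct that analysis explicitly, correctly identifying its crux --- that each potential maximal clique $\Omega$ participates in at most $n$ triples $(S,C,\Omega)$, since the block is determined by $\Omega$ together with a component of $G-\Omega$ --- so your write-up is a more detailed rendering of the same argument.
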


\begin{proof}
	W.l.o.g.\ we assume here that the graph $G$ is connected.
	Otherwise we simply run the algorithm once for each connected component of $G$.	
	The correctness of the algorithm follows easily: By
        Lemma~\ref{lem4.8} and
        Lemma~\ref{lem:well-known2}~(\ref{lem:wk2:omegablock}) the for-loop in
        the lines~\ref{algo_ftw_line_2}-\ref{algo_ftw_line_12}
        correctly computes $\ftw(R(S,C))$ for all full blocks $(S,C)$
        of $G$.  Then the \ftwidth of the graph is computed in
        line~\ref{algo_ftw_line_13} using Lemma~\ref{kor3.1_new}.
	As a table of $f$ w.r.t.\ $G$ is given, the proof of the running
        time is the same as in \cite{FomKraTodVil08}.
\end{proof}

\subsection{Computing Fractional Hypertree-Width}

\begin{lemma}
Let $H=(V(H),E(H))$ be a hypergraph with $n$ vertices and $m$ hyperedges. 
A table of $\rho^*_H$ w.r.t $\gaif H$ 
can be computed in time
$
t(m,n) = \Oof(1.734601^n \cdot m).
$
\end{lemma}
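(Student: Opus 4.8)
The plan is to bound the number of entries in the table and the cost of computing each entry separately. By Lemma~\ref{lem:fomvil08}, the Gaifman graph $\gaif H$ has $\Oof(1.6181^n)$ minimal separators and $\Oof(1.734601^n)$ potential maximal cliques, and these can all be listed in time $\Oof(1.734601^n)$. From the list of minimal separators one can enumerate all (inclusion-minimal) full blocks $(S,C)$ with only polynomial overhead per separator, so the total number of sets $X \subseteq V(H)$ for which we must evaluate $\rho^*_H(X)$ — ranging over $S \cup C$ for full blocks and over potential maximal cliques $\Omega$ — is $\Oof(1.734601^n)$. Hence it suffices to show that a single value $\rho^*_H(X)$ can be computed in time $\Oof(m \cdot n^{\Oof(1)})$, i.e.\ polynomial in the input size; absorbing the polynomial-in-$n$ factor into the rounded exponential base (as the paper's footnote permits) then yields the claimed $\Oof(1.734601^n \cdot m)$ bound.

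First I would recall that $\rho^*_H(X)$ is by definition the optimum of the linear program: minimize $\sum_{e \in E(H)} \gamma(e)$ subject to $\sum_{e \ni v} \gamma(e) \ge 1$ for all $v \in X$ and $\gamma(e) \ge 0$ for all $e$. This is a covering LP with $m$ variables and $|X| \le n$ nontrivial constraints. The key point is that such an LP can be solved exactly in time polynomial in its bit-size, e.g.\ by the ellipsoid method or an interior-point method; moreover, since all coefficients are $0$/$1$, the optimum is a rational number of polynomially bounded bit-length, so the exact value can be recovered. Thus each table entry costs $\poly$ time together with a single pass over the $m$ edges to set up the constraint matrix, giving $\Oof(m \cdot \poly)$ per entry. (If one prefers to avoid invoking general LP solvers, one can instead observe that $\rho^*_H(X)$ equals the value of the LP-relaxation of set cover on the instance induced by $X$, which is a standard polynomially-solvable object.)

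Multiplying the $\Oof(1.734601^n)$ bound on the number of entries by the per-entry cost $\Oof(m \cdot \poly)$, and folding the polynomial factor into the base of the exponent via $c^n \cdot n^{\Oof(1)} = \Oof((c+\epsilon)^n)$ with $1.7346 + \epsilon \le 1.734601$, gives the total running time $t(m,n) = \Oof(1.734601^n \cdot m)$. The main obstacle — and the only genuinely non-routine point — is the claim that the fractional cover LP can be solved \emph{exactly} (not merely approximately) in polynomial time; this rests on the fact that the LP has polynomially bounded encoding length and that exact polynomial-time LP solvability is classical (Khachiyan), so one should cite the appropriate reference rather than reprove it. Everything else (enumerating full blocks from separators, the counting bounds, the exponent-rounding convention) is bookkeeping already set up in the preceding lemmas.
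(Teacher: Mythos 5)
Your proposal is correct and follows essentially the same route as the paper: enumerate the $\Oof(1.734601^n)$ potential maximal cliques and full blocks via Lemma~\ref{lem:fomvil08} and the full-block computation of \cite{FomKraTodVil08}, then evaluate $\rho^*_H$ on each set by setting up and exactly solving the fractional edge cover LP (with $m$ variables and at most $n$ constraints) in time $\mathrm{poly}(n)\cdot m$, absorbing the polynomial factor into the rounded exponential base. The paper cites Karmarkar's interior-point method where you cite Khachiyan/ellipsoid, and it is slightly terser about why the optimum is an exactly computable rational, but these are cosmetic differences.
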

\begin{proof}
Note that we can compute $\gaif H$ from $H$ in time $\Oof(mn^2)$.
By Lemma~\ref{lem:fomvil08} we can construct a list of all minimal separators
and all potential maximal cliques of $\gaif H$ in time $\Oof(1.734601^n)$.

The list of minimal separators can be used to compute a list of all full blocks
just as has been done in \cite{FomKraTodVil08} in
$\Oof(1.734601^n)$ time.
We show how to compute the values $\rho^*_H(\Omega)$ for each potential 
maximal clique; the computation for full blocks works analogously.
For each potential maximal clique $\Omega$ in the list, we set up the linear
program 
\begin{eqnarray*}
 \text{minimize} & & \sum_{e \in E(H)} \gamma_e \\
 \text{subject to} & & \sum_{e \ni v} \gamma_e \ge 1\;\quad \text{ for all } v \in \Omega.
\end{eqnarray*}
This takes $\Oof(mn)$ time and space. By standard facts from linear programming, we know that
this program has an optimal rational solution. By a standard 
linear programming algorithm (see e.g. \cite{Kar84}) this program can be solved in time $\textup{poly}(n)\cdot m$.
\end{proof}
Combining this with Theorem~\ref{thmMain}. We obtain
\begin{corollary}
The fractional hypertree-width of a given hypergraph $H$ and a
corresponding tree decomposition can be computed in time 
$\Oof(1.734601^n \cdot m)$.
\end{corollary}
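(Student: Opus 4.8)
The plan is to instantiate Theorem~\ref{thmMain} with the hypergraph $H$ and the width function $f_H := \rho^*_H$. First I would check that $\rho^*_H$ is a legitimate width function, i.e.\ $\rho^*_H \in \FFF(\vrtH)$: it maps $2^{\vrtH}$ into $\R_0^+$, and it is monotone, since for $X \subseteq Y \subseteq \vrtH$ every fractional edge cover of $Y$ w.r.t.\ $H$ is also a fractional edge cover of $X$, whence $\rho^*_H(X) \le \rho^*_H(Y)$. By the definition of fractional hypertree-width we have $\fhw(H) = \afhtw{\rho^*_H}(H)$, so the quantity to be computed is exactly the one produced by Theorem~\ref{thmMain}, and the same theorem also returns an $f_H$-optimal tree decomposition; by Lemma~\ref{MyLem01} this is automatically a tree decomposition of $H$ itself, as required by the statement.

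Next I would feed in the running-time bound established in the preceding lemma: a table of $\rho^*_H$ w.r.t.\ the Gaifman graph $\gaif{H}$ can be computed in time $t(m,n) = \Oof(1.734601^n \cdot m)$. Plugging this into the bound of Theorem~\ref{thmMain} yields a total running time of $\Oof(1.734601^n + t(m,n) + mn^2) = \Oof(1.734601^n + 1.734601^n \cdot m + mn^2)$. Since $n^2 = \Oof(1.734601^n)$, the term $mn^2$ is absorbed into $\Oof(1.734601^n \cdot m)$; and since $H$ has no isolated vertices we have $m \ge 1$ whenever $n \ge 1$ (the case $n = 0$ being trivial), so the additive term $1.734601^n$ is likewise dominated. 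Hence the total running time simplifies to $\Oof(1.734601^n \cdot m)$, which is precisely the claimed bound.

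In short, the corollary is a direct composition of Theorem~\ref{thmMain} with the table-computation lemma for $\rho^*_H$, and no substantial obstacle arises: the only genuine content is (a) the one-line verification that $\rho^*_H$ is monotone so that Theorem~\ref{thmMain} is applicable, and (b) the elementary asymptotic bookkeeping that collapses the three summands $1.734601^n$, $1.734601^n \cdot m$, and $mn^2$ into the single dominant term $\Oof(1.734601^n \cdot m)$.
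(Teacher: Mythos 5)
Your proposal is correct and follows exactly the paper's route: the paper proves this corollary by simply combining the preceding lemma (the $\Oof(1.734601^n \cdot m)$ table computation for $\rho^*_H$) with Theorem~\ref{thmMain}, which is precisely what you do, only with the monotonicity check and the asymptotic bookkeeping spelled out explicitly.
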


\subsection{Computing Generalized Hypertree Width}
\label{subsect:edgecover}

For a function $f : A \rightarrow B$, with $|A|=n$, we say that $f(x)$
can be computed in time $\Oof(g(n))$ to mean the time needed to evaluate $f$
\emph{once} at input $x \in A$. We say \emph{a table of $f$} can be computed
in time $\Oof(g'(n))$ if the value of $f(x)$ can be computed and
stored in a table for \emph{every} $x \in A$ in total time $\Oof(g'(n))$.

Let us fix a hypergraph $H$ on $n$ vertices and $m$ edges. In order to
compute the generalized hypertree-width of $H$ using
Theorem~\ref{thmMain}, we need to compute a table of $\rho_H$
(w.r.t.\ $\gaif H$).  This can be accomplished by a fairly
straightforward dynamic programming algorithm in time $\Oof(2^n mn)$:
build a table with an entry for every pair $(U,i)$, $1 \leq i \leq m$,
where $U$ is a subset of the vertices and $\set{e_1,\dots,e_m}$ are
the edges of the graph.  For every $(U,i)$, store the minimum-size
edge cover for $U$ that uses only edges $\set{e_1,\dots,e_i}$; this
can be easily done by considering the entries stored at $(U,i-1)$ and
$(U \setminus e_i, i-1)$. An additional factor of $n$ is needed to
look up an add $n$-bit integers.

This approach of computing $\rho_H$ has the drawback 
of being dependent on $m$, which itself
might be exponential in $n$ an thus yields an overall running time of $O(n4^n)$
in the worst case. Fortunately, as we shall see now, there is an elegant 
machinery which allows us 
to significantly improve this time bound.

\subsubsection{Faster Computation using the Fast M\"obius Transform}

Using \emph{the principle of inclusion-exclusion} and the \emph{fast
  zeta transform}, Bj\"orklund et al.~\cite{BjorklundHK09} show, for a
given set $N$ of $n$ elements and a family of its subsets, how to
count the number of $k$-covers of $N$ in time $\Oof(2^n n^2)$, where
$k$ is part of the input. This leads to an $\Oof(3^n n^2)$-time
algorithm to compute a table for $\rho_H^k$, where $\rho_H^k(U)$
denotes the number of edge covers of $U \subseteq
\vrtH$ using at most $k$ hyperedges.  We show how to improve this
running time to $\Oof(2^n n^3)$.

Let $N$ be an $n$-element set and $f: 2^N \rightarrow \R$ be a
real-valued function on the set of all subsets of $N$. The zeta
transform~\cite{Rota64} of $f$, denoted by $\hat{f} :
2^N \rightarrow \R$ is defined as
\begin{equation*}
\hat{f}(Y) = \sum_{S \subseteq Y} f(S), \qquad \text{for } Y \subseteq N \, .
\end{equation*}
The straightforward method to compute a table for the zeta transform
of $f$, i.e.\ compute $\hat{f}(Y)$ for all ${Y \subseteq N}$, requires
$\Oof(3^n)$ additions in total. However, this can be improved to
$\Oof(2^n n)$ additions using Yates's
method~\cite{Yates37,BjorklundHK09} as specified in the following
lemma; this algorithm is known as the \emph{fast M\"obius transform}
or the \emph{fast zeta transform}; we use the latter term in this
work.

\begin{lemma}[\cite{Yates37,BjorklundHK09}]\label{lem:fastzeta}
Let $N$ be a set of $n$ elements and $f: 2^N \rightarrow \N$ a
function in the range $[-M,M]$. A table for the zeta transform
$\hat{f}$ of $f$ can be computed via $\Oof(2^n n)$ additions with $\Oof(n
\log M)$-bit integers.
\end{lemma}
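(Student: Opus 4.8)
The plan is to use Yates's method \cite{Yates37}, which computes the table of $\hat f$ by ``summing out'' one element of $N$ at a time. Fix an arbitrary ordering $N = \{v_1, \dots, v_n\}$ and, for $0 \le j \le n$, define the intermediate function $f_j : 2^N \to \R$ by
\[
  f_j(Y) \;=\; \sum_{\substack{S \subseteq Y \\ Y \setminus \{v_1,\dots,v_j\} \,\subseteq\, S}} f(S) .
\]
Reading off the two extreme cases, $f_0 = f$ (the only admissible $S$ is $Y$ itself) and $f_n = \hat f$ (every subset of $Y$ is admissible). The heart of the argument is the one-step recurrence
\[
  f_j(Y) \;=\;
  \begin{cases}
    f_{j-1}(Y) & \text{if } v_j \notin Y,\\[2pt]
    f_{j-1}(Y) + f_{j-1}(Y \setminus \{v_j\}) & \text{if } v_j \in Y,
  \end{cases}
\]
valid for $1 \le j \le n$. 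I would prove it by splitting the sum defining $f_j(Y)$ according to whether $v_j \in S$: if $v_j \notin Y$ the constraint involving $v_j$ is vacuous and nothing changes, while if $v_j \in Y$ the part with $v_j \notin S$ is exactly $f_{j-1}(Y \setminus \{v_j\})$ and the part with $v_j \in S$ is exactly $f_{j-1}(Y)$.

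The recurrence turns into the algorithm directly. Keep an array $A$ indexed by the $2^n$ subsets of $N$, initialised to $A[Y] = f(Y)$; for $j = 1, \dots, n$ and for every $Y$ with $v_j \in Y$, replace $A[Y]$ by $A[Y] + A[Y \setminus \{v_j\}]$. This is correct even when done in place, since round $j$ modifies only the entries of sets containing $v_j$, so when $A[Y]$ is updated the entry $A[Y \setminus \{v_j\}]$ still holds its round-$(j-1)$ value $f_{j-1}(Y \setminus \{v_j\})$; hence after round $j$ the array holds $f_j$, and after round $n$ it holds $\hat f$. Round $j$ costs one addition per set containing $v_j$, i.e.\ $2^{n-1}$ additions, for a total of $n \cdot 2^{n-1} = \Oof(2^n n)$ additions.

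Finally, for the bit-length bound: each value $f_j(Y)$ is a sum of at most $2^j \le 2^n$ values of $f$, every one of absolute value at most $M$, so all array entries stay within $[-2^n M, 2^n M]$ and thus fit in $\Oof(n + \log M)$ bits, which is $\Oof(n \log M)$ (we may assume $M \ge 2$); in particular both operands of every addition have this size. I do not foresee a genuine obstacle here: the only points needing care are formulating the invariant satisfied by the intermediate arrays $f_j$ and checking that the in-place update never overwrites a value still in use, both of which are handled by the remarks above.
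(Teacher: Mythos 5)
Your proof is correct. The paper does not prove this lemma at all --- it is stated as a known result with citations to Yates and to Bj\"orklund et al.\ --- so there is nothing to compare against except the standard argument in those references, which is exactly what you give: the intermediate functions $f_j$, the one-step recurrence split on whether $v_j \in S$, the in-place update whose safety follows from the fact that round $j$ only writes entries of sets containing $v_j$ while reading entries of sets not containing it, and the count of $n \cdot 2^{n-1}$ additions. Your remark on the bit-length (values bounded by $2^n M$, hence $\Oof(n + \log M) \subseteq \Oof(n\log M)$ bits for $M \ge 2$) is also right and in fact slightly sharper than the bound stated in the lemma.
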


\noindent For a set $X \subseteq \vrtH$, define the number of edges that avoid
$X$ as $a(X)= |\setext{e \in \edgeH}{e\cap X = \emptyset}|$. Using the
principle of inclusion-exclusion, Bj\"orklund et
al.~\cite{BjorklundHK09} show

\begin{lemma}[adapted from~\cite{BjorklundHK09}]
Let $H$ be a hypergraph. For a set $U \subseteq \vrtH$, let
$\rho_H^k(U)$ denote the number of edge covers of $U$ using at most $k$
hyperedges; furthermore, let $a(U)$ be the number of hyperedges that avoid
$U$. Then we have
\begin{equation}\label{eq:edgecover}
\rho_H^k(U) = \sum_{X \subseteq U} (-1)^{|X|} a(X)^k \, .
\end{equation}
\end{lemma}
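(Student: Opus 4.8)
The plan is to establish the inclusion-exclusion formula~(\ref{eq:edgecover}) by classifying edge covers according to which vertices of $U$ they fail to cover. First I would fix $U \subseteq \vrtH$ and $k$, and for each subset $X \subseteq U$ let $b(X)$ denote the number of ordered (or rather, multiset) choices of at most $k$ hyperedges whose union is \emph{disjoint} from $X$ — equivalently, $k$-tuples of hyperedges each avoiding all of $X$. Since a hyperedge avoids $X$ if and only if it lies in the set counted by $a(X)$, and since a $k$-tuple of such edges can be chosen independently, we have $b(X) = a(X)^k$ exactly. (Here I am using the convention that $\rho_H^k$ counts size-$k$ \emph{multisets} or $k$-tuples of hyperedges, matching the counting scheme of Bj\"orklund et al.; the at-most-$k$ version is recovered by padding with a fixed dummy edge or by the standard telescoping, and this bookkeeping should be spelled out to match the precise definition used.)

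Next I would apply inclusion-exclusion over the vertices of $U$. A $k$-tuple of hyperedges covers all of $U$ precisely when the set of vertices of $U$ it misses is empty. Letting $Y \subseteq U$ range over the possible "missed" sets, the number of $k$-tuples whose missed set is \emph{exactly} $Y$ is some quantity $c(Y)$, and $b(X) = \sum_{Y \supseteq X,\, Y \subseteq U} c(Y)$, since a tuple is counted in $b(X)$ iff every edge avoids $X$, i.e.\ iff $X$ is contained in the missed set. Möbius inversion over the subset lattice of $U$ then gives $c(\emptyset) = \sum_{X \subseteq U} (-1)^{|X|} b(X) = \sum_{X \subseteq U} (-1)^{|X|} a(X)^k$, and $c(\emptyset) = \rho_H^k(U)$ by definition. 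This is exactly~(\ref{eq:edgecover}).

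The only real subtlety — and the step I would treat most carefully — is making the counting conventions line up: whether $\rho_H^k$ counts sets, multisets, or ordered tuples of hyperedges, and whether it is "exactly $k$" or "at most $k$". The clean identity $b(X) = a(X)^k$ needs the ordered-tuple-of-exactly-$k$ reading (with repetitions allowed), so one should either define $\rho_H^k$ that way from the outset or insert a short argument converting between the conventions; the "at most $k$" version follows by summing over $0 \le j \le k$ or by the dummy-edge trick, and none of these changes affect the asymptotic running time. Everything else is the routine observation that "avoids $X$" is multiplicative over the tuple and that inclusion-exclusion over $U$ isolates the fully-covering tuples.

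Finally, for the running-time claim implicit in the surrounding discussion, I would note that once $a(X)$ is available for all $X \subseteq \vrtH$, formula~(\ref{eq:edgecover}) expresses $\rho_H^k(U)$ as (a signed version of) a zeta transform of the function $X \mapsto (-1)^{|X|} a(X)^k$ restricted to subsets of $U$; computing $a(X)$ for all $X$ and then applying Lemma~\ref{lem:fastzeta} yields a table of $\rho_H^k$ in time $\Oof(2^n n)$ additions of $\Oof(n)$-bit integers, hence $\Oof(2^n n^3)$ overall after accounting for bit-length and the computation of the $a(X)$'s, which is the promised improvement over the $\Oof(3^n n^2)$ bound.
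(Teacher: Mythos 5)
Your proof is correct and follows exactly the inclusion--exclusion/M\"obius-inversion argument of Bj\"orklund et al.\ that the paper cites; the paper itself states this lemma without giving a proof. Your caveat about counting conventions is well taken: the right-hand side literally counts ordered $k$-tuples of hyperedges whose union contains $U$, not subsets of size at most $k$, but since the algorithm only ever tests whether $\rho_H^k(U)$ is nonzero, the discrepancy is harmless.
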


\noindent Now we are ready to state the main result of this subsection:

\begin{theorem}\label{thm:edgecover}
Let $H$ be a hypergraph on $n$ vertices and $m$ edges. Let $\rho_H^k :
2^{\vrtH} \rightarrow \N$ be the function that counts the number of
edge covers with at most $k$ hyperedges for every subset of $\vrtH$. Then a table for
$\rho_H^k$ can be computed in time $\Oof(2^n n^3)$.
\end{theorem}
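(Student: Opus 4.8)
The goal is to compute a table of $\rho_H^k$ in time $\Oof(2^n n^3)$, using equation~\eqref{eq:edgecover}, namely $\rho_H^k(U) = \sum_{X \subseteq U} (-1)^{|X|} a(X)^k$. The key observation is that the right-hand side is \emph{almost} a zeta transform: if we define $g(X) := (-1)^{|X|} a(X)^k$, then $\rho_H^k(U) = \hat{g}(U)$ is exactly the zeta transform of $g$ evaluated at $U$. So the plan is: first compute the function $a$ on all subsets, then form $g$, then apply the fast zeta transform of Lemma~\ref{lem:fastzeta} to obtain the whole table of $\rho_H^k$ in $\Oof(2^n n)$ additions; finally, account for the bit-length of the integers involved, which contributes the remaining polynomial factors.

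\textbf{Step 1: computing $a(X)$ for all $X \subseteq \vrtH$.} Recall $a(X) = |\{ e \in \edgeH : e \cap X = \emptyset\}| = |\{e : e \subseteq \vrtH \setminus X\}|$. Let $b(Y)$ be the number of hyperedges $e$ with $e = Y$ (so $b$ is supported on at most $m$ sets, each value at most $m$); then the number of hyperedges contained in a set $Z$ is $\hat{b}(Z) = \sum_{Y \subseteq Z} b(Y)$, so $a(X) = \hat{b}(\vrtH \setminus X)$. Thus one more application of the fast zeta transform (Lemma~\ref{lem:fastzeta}) gives all values $a(X)$ in $\Oof(2^n n)$ additions on $\Oof(\log m) = \Oof(n)$-bit integers, hence time $\Oof(2^n n^2)$, after an initial $\Oof(mn)$-time pass to build the table of $b$ from the edge list of $H$.

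\textbf{Step 2: forming $g$ and running the second zeta transform.} For each $X$ we compute $g(X) = (-1)^{|X|} a(X)^k$. Since $a(X) \le m \le 2^n$ and $k \le m$, each value $a(X)^k$ is a nonnegative integer bounded by $M := 2^{nm}$ in the worst case; the naive approach would blow up the bit-length, so we must be a little careful. However $k \le n$ suffices for our application to $\rho_H$ (at most $n$ hyperedges are ever needed to cover $n$ vertices without isolated vertices — more precisely $\rho_H(U) \le |U| \le n$), so we may assume $k \le n$, giving $a(X)^k \le 2^{n^2}$, i.e.\ $\Oof(n^2)$-bit integers. Computing $a(X)^k$ by repeated squaring takes $\Oof(\log k)$ multiplications of $\Oof(n^2)$-bit numbers, i.e.\ $\operatorname{poly}(n)$ time per $X$, so the whole table of $g$ is built in $\Oof(2^n \cdot \operatorname{poly}(n))$ time. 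Now apply Lemma~\ref{lem:fastzeta} once more with $M = 2^{\Oof(n^2)}$: the table of $\hat{g} = \rho_H^k$ is computed via $\Oof(2^n n)$ additions on $\Oof(n^2)$-bit integers, for a total of $\Oof(2^n n^3)$ time. Adding up the three contributions — $\Oof(mn) + \Oof(2^n n^2) + \Oof(2^n n^3)$ — and noting $m \le 2^n$, the bound $\Oof(2^n n^3)$ follows.

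\textbf{Main obstacle.} The one genuinely delicate point is controlling the integer bit-length: a priori $\rho_H^k(U)$ and the intermediate values $a(X)^k$ can be as large as $2^{\Theta(n^2)}$ (or worse if $k$ is allowed to be as large as $m$), and an incautious analysis that treats arithmetic as unit-cost would be wrong, while one that uses the trivial $k \le m$ bound would give an extra $m$ factor. The fix is the remark that for the intended application we only ever need $k \le n$, so arithmetic is on $\Oof(n^2)$-bit integers and each of the $\Oof(2^n n)$ additions costs $\Oof(n^2)$, yielding precisely the claimed $\Oof(2^n n^3)$. Everything else is a routine double application of the fast zeta transform (Lemma~\ref{lem:fastzeta}) together with the identity $a(X) = \hat b(\vrtH \setminus X)$ and the observation that~\eqref{eq:edgecover} exhibits $\rho_H^k$ as the zeta transform of $(-1)^{|X|}a(X)^k$.
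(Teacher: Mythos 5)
Your proposal is correct and follows essentially the same route as the paper: compute $a(X)=\hat{e}(\vrtH\setminus X)$ by one application of the fast zeta transform, form $g(X)=(-1)^{|X|}a(X)^k$, and observe via Equation~(\ref{eq:edgecover}) that $\rho_H^k=\hat{g}$, so a second fast zeta transform on $\Oof(n^2)$-bit integers gives the table in $\Oof(2^n n^3)$ time. Your explicit justification of the $\Oof(n^2)$ bit-length via the observation that $k\le n$ suffices for the intended application is a point the paper's proof asserts without comment, and is a welcome clarification rather than a deviation.
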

\begin{proof}
First, we compute the values $a(X)$ for every $X \subseteq \vrtH$
using the idea in~\cite{BjorklundHK09}: observe that if $e(S)$ is the
indicator function telling if $S \subseteq \vrtH$ is an edge or not,
then
\begin{equation*}
a(X) = \sum_{S \subseteq \vrtH \setminus X} e(S) = \hat{e}(\vrtH \setminus X)
\end{equation*}
can be computed from the zeta transform of $e$. Hence, a table for
$g(X) := (-1)^{|X|} a(X)^k$ can be pre-computed and stored in time
$\Oof(2^n n^2)$ using Lemma~\ref{lem:fastzeta} (having accounted for
an overhead factor of $n$ for looking up and adding $n$-bit
integers). But then, Equation~(\ref{eq:edgecover}) implies that
$\rho_H^k$ is just the zeta-transform of $g$ and hence, a table for
$\rho_H^k$ can be computed in time $\Oof(2^n n^3)$ using
$\Oof(n^2)$-bit integers.
\end{proof}

For any given hypergraph $H$ and integer $k$, we can compute a table
that stores for every subset ${U \subseteq \vrtH}$ if it has an edge
cover of size at most $k$ using Theorem~\ref{thm:edgecover}. Together
with Theorem~\ref{thmMain} this implies an $\Oof(2^n n^3)$-time
algorithm to decide whether the generalized hypertree-width of a given
graph is at most $k$ and if so, compute a corresponding tree
decomposition. The tree decomposition with the \emph{minimum}
generalized hypertree-width can then be obtained by binary search on
$k$, adding only another factor of $n$ as overhead. Note, however,
that this method does not compute the actual (minimum) edge cover for
each bag of the tree decomposition; to this end, the simple dynamic
programming algorithm described in the beginning of this subsection
has to be used.

\begin{corollary}
The generalized hypertree-width of a given hypergraph $H$ and a
corresponding tree decomposition can be computed in time $\Oof^*(2^n)$. 
The minimum edge cover for every bag of the tree
decomposition can be computed in total time $\Oof^*(2^n m)$.
\end{corollary}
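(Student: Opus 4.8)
The plan is to assemble the two running-time bounds claimed in the corollary from the machinery built up earlier in the section. For part (i), the key observation is that Theorem~\ref{thmMain} reduces computing $\ghw(H) = \afhtw{\rho_H}(H)$ to the task of producing a table of the width function $\rho_H$ with respect to $\gaif H$, after which the algorithm runs in time $\Oof(1.734601^n + t(m,n) + mn^2)$. So first I would make precise that the ``table'' needed by Theorem~\ref{thmMain} only has to record, for each inclusion-minimal full block $(S,C)$ and each potential maximal clique $\Omega$, the value $\rho_H(S\cup C)$ resp.\ $\rho_H(\Omega)$; and there are only $\Oof(1.734601^n)$ such sets by Lemma~\ref{lem:fomvil08}. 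Rather than computing $\rho_H$ from scratch on each of these sets, I would invoke Theorem~\ref{thm:edgecover}: for a fixed $k$ it produces, in time $\Oof(2^n n^3)$, a table that tells for \emph{every} subset $U\subseteq\vrtH$ whether $U$ has an edge cover of size at most $k$ (by checking whether $\rho_H^k(U)>0$). Reading off, for each relevant set, the smallest such $k$ via binary search costs only an extra factor of $n$; so $t(m,n) = \Oof^*(2^n)$ dominates the $\Oof(1.734601^n)$ term in Theorem~\ref{thmMain}, and part (i) follows.

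For the second sentence of the corollary, I would note that Theorem~\ref{thm:edgecover} certifies \emph{whether} a small edge cover exists but does not hand back a concrete cover, which is needed to annotate the bags of the returned tree decomposition. To fill this gap I would fall back on the elementary dynamic-programming table described at the start of Subsection~\ref{subsect:edgecover}: for every pair $(U,i)$ with $U\subseteq\vrtH$ and $1\le i\le m$ it stores a minimum-size edge cover of $U$ using only $e_1,\dots,e_i$, built from the entries $(U,i-1)$ and $(U\setminus e_i, i-1)$, in total time $\Oof(2^n mn)=\Oof^*(2^n m)$. The optimal tree decomposition has at most $n$ bags, so looking up the stored cover for each of its bags is negligible; hence the minimum edge covers of all bags can be extracted in total time $\Oof^*(2^n m)$, as claimed.

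Finally I would stitch the pieces together: run the $\Oof^*(2^n)$ procedure of part (i) (with binary search over $k$) to obtain $\ghw(H)$ and, via Theorem~\ref{thmMain}, a $\rho_H$-optimal tree decomposition $\TTT$ of $H$; then, if the actual covers are wanted, run the $\Oof^*(2^n m)$ dynamic program once and annotate each of the $\Oof(n)$ bags of $\TTT$. I expect the only subtle point, and the place where a careful reader might object, to be the claim that ``decide whether $\ghw(H)\le k$'' plus binary search genuinely suffices to \emph{compute} the optimal value and an optimal decomposition: one has to check that Theorem~\ref{thmMain}, invoked with the width function $\rho_H$ truncated at the value found, really returns a tree decomposition whose $\rho_H$-width equals $\ghw(H)$, and that the truncation does not distort the minimum. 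This is where I would spend the most care, though it is ultimately a bookkeeping matter rather than a genuine obstacle, since $\rho_H$ is monotone and all that Algorithm~\ref{algo_ftw} ever does with it is take maxima and minima of its values on the $\Oof(1.734601^n)$ relevant sets.
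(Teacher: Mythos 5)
Your proposal is correct and follows essentially the same route as the paper: Theorem~\ref{thm:edgecover} supplies the edge-cover information, Theorem~\ref{thmMain} does the rest, a binary search over $k$ (which is at most $n$) costs only a polynomial factor, and the elementary $\Oof(2^n mn)$ dynamic program recovers the actual covers for the bags. The only cosmetic difference is that you mainly phrase the first part as computing a genuine value table for $\rho_H$ on the relevant sets (which, as you note, sidesteps the truncation worry entirely), whereas the paper binary-searches on the global decision problem ``$\ghw(H)\le k$''; both variants are sound and give the same $\Oof^*(2^n)$ bound.
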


\section{Conclusion}
\label{sect:conclusion}

We present an algorithm that computes the $f$-width of a hypergraph
for any monotone function $f$. Apart from the overhead in computing
$f$, the algorithm works within the same time bound as the currently
fastest exact algorithms for tree-width.  As a consequence we obtain
fast exact algorithms to compute the generalized and
fractional hypertree-widths of a hypergraph. 

An important open question is whether these algorithms can be further
developed to also compute the more general hypertree width measures of
adaptive width and submodular width (see Marx \cite{Marx09a,Marx10}).

\bibliographystyle{alpha}
\bibliography{references}

\end{document}